\documentclass[preprint,12pt,authoryear]{elsarticle}
\usepackage[T1]{fontenc}
\usepackage[utf8]{inputenc}
\usepackage{amsmath,amssymb,amsthm,mathtools}
\usepackage{booktabs,array,multirow}
\usepackage{graphicx}
\usepackage{algorithm}
\usepackage{algpseudocode}
\algrenewcommand\algorithmicrequire{\textbf{Input:}}
\usepackage{enumitem}
\setlist{nosep,leftmargin=*}
\usepackage{xcolor}
\usepackage[colorlinks=true,linkcolor=black,citecolor=black,urlcolor=blue!50!black]{hyperref}
\usepackage{microtype}
\usepackage{setspace}
\usepackage{adjustbox}
\usepackage{etoolbox}
\AtBeginEnvironment{table}{\linespread{1}\selectfont}
\AtBeginEnvironment{algorithm}{\linespread{1.1}\selectfont\raggedright}

\journal{Computational Statistics and Data Analysis}

\newtheorem{theorem}{Theorem}
\newtheorem{proposition}{Proposition}
\newtheorem{corollary}{Corollary}
\newtheorem{lemma}{Lemma}[section]
\newtheorem{assumption}{Assumption}
\newtheorem{sproposition}[lemma]{Proposition}

\newcommand{\R}{\mathbb{R}}
\newcommand{\E}{\mathbb{E}}
\renewcommand{\Pr}{\mathrm{P}}
\newcommand{\KL}{\mathrm{KL}}
\newcommand{\hU}{\widehat U}
\newcommand{\hPi}{\widehat\Pi_n}
\newcommand{\hM}{\widehat M_n}
\newcommand{\Rset}{\mathcal R_{d,\kappa}}

\hypersetup{pdftitle={Bayesian heterogeneous copula mixtures with nonparametric margins: consistency, identifiability and tail asymmetry in physical fitness data},
pdfauthor={Yujian Liu}}

\begin{document}

\begin{frontmatter}

\title{Bayesian heterogeneous copula mixtures with nonparametric margins: consistency, identifiability and tail asymmetry in physical fitness data}

\author[a,b]{Yujian Liu\corref{cor1}}
\ead{liuyujian02@sus.edu.cn}
\cortext[cor1]{Corresponding author. ORCID: \href{https://orcid.org/0000-0002-9512-1272}{0000-0002-9512-1272}.}

\affiliation[a]{organization={School of Intelligent Sports Engineering, Shanghai University of Sport},
            addressline={399 Changhai Road, Yangpu District},
            city={Shanghai},
            postcode={200438},
            country={China}}
\affiliation[b]{organization={School of Economics and Management, Shanghai University of Sport},
            addressline={399 Changhai Road, Yangpu District},
            city={Shanghai},
            postcode={200438},
            country={China}}

\begin{abstract}
Finite mixtures of Clayton, Gumbel, Frank and Gaussian copulas can describe dependence that differs between the upper and lower tails. We study a two-stage Bayesian analysis in which ranks or kernel estimates replace the margins and the copula likelihood is evaluated at the resulting pseudo-observations. This pseudo-posterior is strongly consistent for the copula density whenever the log density admits a logarithmic boundary envelope. Both stages may use the same data, margins may be standardized within observed strata, and neither smoothness nor identifiability is required. The envelope holds for finite mixtures of Gaussian, Student, Clayton, Gumbel and Frank copulas in any fixed dimension; tail-dependence coefficients and conditional tail probabilities are therefore consistently estimated. We further prove that Clayton, Gumbel, Frank and Gaussian copulas are jointly finitely linearly independent, which makes mixture weights and components identifiable and consistently estimated. In simulations the pseudo-posterior matches multi-start maximum pseudo-likelihood in large samples. It is more stable in small samples and near independence (every component close to the independence copula), where tail coefficients are learned long before the weights and a marginal Metropolis sampler is up to twice as efficient as data augmentation. Two physical fitness datasets show mirror-image asymmetries. Among 8772 university students, sprint and jump performance are coupled mainly at the top. Among 5336 adults in a national health survey, low grip strength and low daily activity cluster together, increasingly with age, whereas high values do not. A Gaussian copula misses both patterns in held-out data.
\end{abstract}

\begin{keyword}
Copula mixture \sep Identifiability \sep Pseudo-posterior \sep Tail dependence \sep Markov chain Monte Carlo \sep Physical fitness
\end{keyword}

\end{frontmatter}

\section{Introduction}\label{sec:intro}

Fitness and health measurements are usually interpreted one at a time. Many practical questions, however, concern how they co-occur at the extremes. Do the students who sprint fastest also jump farthest, and are the slowest also the weakest jumpers? Are adults with a weak grip also the least active, and are the strongest also the most active? Such questions concern the tails of a joint distribution, which a correlation coefficient does not describe: a rank correlation near zero is compatible with a strong co-occurrence of low values that is offset elsewhere in the distribution. Copulas separate dependence from the margins. If $X\in\R^d$ has continuous marginal distribution functions $F_1,\ldots,F_d$, then $\Pr(X\le x)=C\{F_1(x_1),\ldots,F_d(x_d)\}$ for a unique copula $C$ \citep{Sklar1959,Nelsen2006}. A single parametric copula, however, fixes the tails in advance. Gaussian and Frank copulas have no tail dependence and Student $t$ copulas the same dependence in both tails, while Clayton and Gumbel copulas have dependence only in the lower and only in the upper tail, respectively \citep{Nelsen2006,Joe2014}. Choosing one of them presupposes the answer. A finite mixture $c=\sum_kw_kc_{\theta_k}$ that combines several families leaves the answer to the data. In a mixture of Clayton, Gumbel, Frank and Gaussian copulas, the four-family mixture for short, the weights indicate whether dependence concentrates in the lower tail, in the upper tail or in neither. Because all components share the same margins, they describe patterns of co-movement, not levels of performance \citep{Kosmidis2016,QuLu2021}.

The margins are often estimated nonparametrically. Ranks and kernel estimates adapt to skewed, heavy-tailed or multimodal margins without restricting their shape, and they protect the dependence analysis from misspecified margins, which can bias copula estimates considerably \citep{Kim2007}. The copula is then fitted to pseudo-observations $\hU_i$ obtained from ranks or smoothed marginal distribution functions. For a single parametric copula such two-stage estimators are well understood \citep{Genest1995,Joe2005,Tsukahara2005,ChenFan2006,FermanianPoignard2024}; mixtures are fitted by penalized or expectation--maximization type likelihood methods \citep{CaiWang2014,Kauermann2014,LiuLong2022}. A Bayesian analysis can take the same route and use the copula density at the transformed observations as a likelihood. This has been done for pair-copula constructions with empirical margins \citep{MinCzado2010}, for time-varying copulas with parametric margins \citep{AlmeidaCzado2012} and for copula mixtures \citep{Liu2023}, and Bayesian nonparametric copula mixtures have also been proposed \citep{Pan2025,LiuYu2026}. The resulting pseudo-posterior is easy to sample. It quantifies uncertainty about weights, component parameters and derived quantities such as tail-dependence coefficients. For copula mixtures, however, its large-sample behavior is unknown.

Establishing that behavior raises two difficulties. The first concerns the margins. The pseudo-observations are dependent and only approximately uniform, and copula log densities are unbounded at the boundary of the unit cube, exactly where marginal errors are largest relative to the values being estimated. Results for generalized posteriors \citep{Bissiri2016,Miller2021} take a limiting criterion as given; they do not show that it survives the replacement of $U_i$ by $\hU_i$. The second concerns the components, about which a consistent estimate of a mixture density says nothing unless the mixture is identifiable. Asymptotic theory for copula mixtures assumes identifiability \citep{CaiWang2014,LiuLong2022}. For mixtures that combine Archimedean and elliptical families, this assumption has, to our knowledge, never been verified.

Theorem~\ref{thm:consistency} resolves the first difficulty. The two-stage pseudo-posterior is strongly consistent for the copula density in $L^1$ whenever the log densities are bounded, uniformly over a compact parameter set, by the logarithmic envelope $A+B\sum_j[-\log\{u_j(1-u_j)\}]$. This envelope holds for finite mixtures of Gaussian, Student, Clayton, Gumbel and Frank copulas (Corollary~\ref{cor:families}). Every quantity that depends continuously on the parameter and only through the density, tail-dependence coefficients and conditional tail probabilities among them, is then consistently estimated without any identification condition (Proposition~\ref{prop:functionals}). For the second difficulty, Theorem~\ref{thm:identification} shows that exchangeable Clayton, Gumbel and Frank copulas and Gaussian copulas with arbitrary nonsingular correlation matrices are jointly finitely linearly independent in every dimension. The pseudo-posterior therefore also recovers the weights and component parameters of mixtures over this dictionary, and any family absent from the true copula loses its weight (Corollaries~\ref{cor:mixing} and~\ref{cor:fourfamily}).

We sample the four-family pseudo-posterior with a marginal Metropolis sampler, which works on the mixture parameters directly, without latent allocations. It is compared with the standard data-augmentation sampler and with multi-start maximum pseudo-likelihood, all three using the same pseudo-observations, mixture order and parameter bounds. The simulations show how far identifiability carries in finite samples. With well-separated components the weights are recovered. Near independence, where every component copula is close to the independence copula and the components are hard to tell apart, the density and the tail coefficients are estimated accurately long before the weights. In the applications, posterior tail coefficients are complemented by conditional tail probabilities at finite levels, which are easier to interpret and can be checked against held-out data. The two applications give mirror-image results. Among 8772 university students, within-sex standardization reverses an apparent trade-off between power and flexibility, and sprint and jump performance are coupled mainly at the top. Among 5336 adults from the US National Health and Nutrition Examination Survey (NHANES), grip strength and accelerometer-measured daily activity are almost uncorrelated, yet low values of the two cluster strongly, and more so with age.

Section~\ref{sec:model} defines the pseudo-posterior and Section~\ref{sec:theory} states the asymptotic results. Computation, simulations and applications follow in Sections~\ref{sec:computation}, \ref{sec:simulation} and~\ref{sec:applications}, and Section~\ref{sec:discussion} discusses limitations. The Supplementary Material contains the proofs, further simulation results, data-processing details, diagnostics and an additional application to wearable-device data.

\section{Two-stage pseudo-posterior}\label{sec:model}

Let $X_1,\ldots,X_n$ be independent copies of $X\in\R^d$, with fixed $d\ge2$ and continuous marginal distribution functions $F_1,\ldots,F_d$. The vector $U=(F_1(X_1),\ldots,F_d(X_d))$ has uniform margins; we assume that its distribution, the copula of $X$, has a density $c_0$ on $(0,1)^d$, and write $U_i$ for the unobserved value of $U$ belonging to $X_i$.

The working model is a family $\{q_\eta:\eta\in H\}$ of strictly positive copula densities indexed by a compact metric space $H$. The main example is the finite mixture
\begin{equation}\label{eq:mixture}
q_\eta(u)=\sum_{k=1}^K w_kc_{\theta_k}(u),\qquad \eta=(w,\theta_1,\ldots,\theta_K)\in\Delta_{K-1}\times\Theta^K,
\end{equation}
where $\Delta_{K-1}$ is the unit simplex and $\{c_\theta:\theta\in\Theta\}$ is a union of parametric copula families. Zero weights and repeated components are allowed, so~\eqref{eq:mixture} contains every mixture with at most $K$ components.

The margins are estimated from the same sample. We consider scaled ranks $\hU_{ij}=R_{ij}/(n+1)$, where $R_{ij}$ is the rank of $X_{ij}$ among $X_{1j},\ldots,X_{nj}$, and in-sample kernel estimates
\begin{equation}\label{eq:kernel}
\hU_{ij}=\widehat F_j(X_{ij}),\qquad \widehat F_j(x)=\frac1n\sum_{l=1}^n\mathcal K_j\!\left(\frac{x-X_{lj}}{h_{nj}}\right),
\end{equation}
where $\mathcal K_j$ is a distribution function with $\mathcal K_j(0)\in(0,1)$ and the bandwidth $h_{nj}>0$ may depend on the data. Given a prior $\Pi$ on $H$, the two-stage pseudo-posterior is
\begin{equation}\label{eq:posterior}
\hPi(d\eta)=\frac{\prod_{i=1}^nq_\eta(\hU_i)\,\Pi(d\eta)}{\int_H\prod_{i=1}^nq_\zeta(\hU_i)\,\Pi(d\zeta)},
\end{equation}
the posterior obtained by treating $\hU_1,\ldots,\hU_n$ as independent draws from $q_\eta$. Let $\hM(\eta)=n^{-1}\sum_i\log q_\eta(\hU_i)$, $M(\eta)=\E\log q_\eta(U)$ and $b(t)=-\log\{t(1-t)\}$ for $t\in(0,1)$.

\section{Asymptotic theory}\label{sec:theory}

This section states the results and outlines the main ideas of their proofs; complete proofs are given in Section~S1 of the Supplementary Material.

\subsection{Consistency}

\begin{assumption}[Boundary envelope]\label{ass:envelope}
The map $(\eta,u)\mapsto\log q_\eta(u)$ is continuous on $H\times(0,1)^d$, and for some constants $A,B\ge0$,
\begin{equation}\label{eq:envelope}
\sup_{\eta\in H}|\log q_\eta(u)|\le E(u):=A+B\sum_{j=1}^db(u_j),\qquad u\in(0,1)^d.
\end{equation}
\end{assumption}

\begin{assumption}[Correct specification and prior support]\label{ass:support}
There is $\eta_0\in H$ with $q_{\eta_0}=c_0$, and $\Pi(N)>0$ for every open neighborhood $N$ of $\eta_0$.
\end{assumption}

\begin{theorem}\label{thm:consistency}
Let Assumptions~\ref{ass:envelope} and~\ref{ass:support} hold, and let $\hU_i$ be the scaled ranks or the kernel estimates~\eqref{eq:kernel} with $h_{nj}\to0$ almost surely for each $j$. Then, almost surely:
\begin{enumerate}[label=(\alph*)]
\item $\sup_{\eta\in H}|\hM(\eta)-M(\eta)|\to0$;
\item $\hPi\{\eta:\|q_\eta-c_0\|_1>\varepsilon\}\to0$ for every $\varepsilon>0$;
\item the posterior predictive density $\overline q_n=\int q_\eta\,\hPi(d\eta)$ satisfies $\|\overline q_n-c_0\|_1\to0$.
\end{enumerate}
\end{theorem}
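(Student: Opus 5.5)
The argument has three parts, in the order of the claims: a uniform law of large numbers for the criterion evaluated at the pseudo-observations, a Kullback--Leibler separation argument giving posterior concentration, and a convexity argument for the predictive density.

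\emph{Part (a).} We split
\[
\hM(\eta)-M(\eta)=\Bigl\{n^{-1}\sum_i\log q_\eta(\hU_i)-n^{-1}\sum_i\log q_\eta(U_i)\Bigr\}+\Bigl\{n^{-1}\sum_i\log q_\eta(U_i)-M(\eta)\Bigr\}.
\]
\begin{itemize}
\item \emph{Second term.} By Assumption~\ref{ass:envelope}, $E(U)$ is integrable, since $\E b(U_j)=2$. Together with continuity of $\eta\mapsto\log q_\eta(u)$ and compactness of $H$, the standard uniform strong law (bracketing by finitely many balls, with dominated convergence on the moduli of continuity) gives almost sure uniform convergence.
\item \emph{First term.} This is the main obstacle, because the summands are unbounded near the boundary. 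Fix $\delta>0$ and truncate to the cube $[\delta,1-\delta]^d$.
\item \emph{Interior.} On the compact set $H\times[\delta/2,1-\delta/2]^d$, $\log q_\eta$ is uniformly continuous. Moreover $\max_{i,j}|\hU_{ij}-U_{ij}|\to0$ almost surely: for ranks this is Glivenko--Cantelli; for kernel estimates, $\sup_x|\widehat F_j-F_j|\to0$ when $h_{nj}\to0$, by Glivenko--Cantelli plus uniform continuity of $F_j$. The $\mathcal K_j(0)$ condition keeps $\hU_{ij}\in(0,1)$. The interior contribution therefore vanishes uniformly in $\eta$.
\item \emph{Boundary.} Bound both $|\log q_\eta(\hU_i)|$ and $|\log q_\eta(U_i)|$ by $E(\cdot)$. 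The $U_i$ part is at most $n^{-1}\sum_i E(U_i)1\{U_i\notin[\delta,1-\delta]^d\}$, which converges to an expectation that is small as $\delta\to0$.
\item \emph{Boundary, pseudo-observations.} We need $\limsup_n n^{-1}\sum_i b(\hU_{ij})1\{\hU_{ij}<\delta\}$ to be small in $\delta$. For ranks this is deterministic: it is a Riemann sum $n^{-1}\sum_{r\le\delta(n+1)}b\{r/(n+1)\}\to\int_0^\delta b$, which is $O(\delta\log(1/\delta))$, and the same holds at the upper end. For kernel estimates, $\widehat F_j(X_{ij})\ge n^{-1}\mathcal K_j(0)$, using the $i$th kernel term. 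I would compare $\widehat F_j(X_{ij})$ with a shrunken rank: the number of $l$ with $X_{lj}\le X_{ij}$ gives $\widehat F_j(X_{ij})\ge\mathcal K_j(0)R_{ij}/n$, since $\mathcal K_j$ is monotone. Then $b(\hU_{ij})\le b(R_{ij}/(n+1))+\text{const}$ in the lower tail, and symmetrically in the upper tail using $1-\mathcal K_j(0)>0$. The rank computation therefore transfers.
\item \emph{Mixed case.} Indices where exactly one of $\hU_i,U_i$ lies in the cube are handled by also truncating at $\delta/2$, using the uniform closeness above.
\end{itemize}

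\emph{Part (b).} The key identity is $M(\eta_0)-M(\eta)=\KL(c_0\|q_\eta)$, which is finite by the envelope. Pinsker's inequality gives $\KL\ge\tfrac12\|q_\eta-c_0\|_1^2$.
\begin{itemize}
\item The set $F_\varepsilon=\{\eta:\|q_\eta-c_0\|_1\ge\varepsilon\}$ is closed, because $\eta\mapsto q_\eta$ is $L^1$-continuous by Scheffé's lemma. It is therefore compact, and $\sup_{F_\varepsilon}M\le M(\eta_0)-\varepsilon^2/2$.
\item By continuity of $M$, which follows from dominated convergence, the set $N=\{M>M(\eta_0)-\varepsilon^2/8\}$ is an open neighbourhood of $\eta_0$.
\item By (a), for large $n$ the numerator mass on $F_\varepsilon$ is at most $\exp\{n(M(\eta_0)-\varepsilon^2/2+\varepsilon^2/8)\}$. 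The denominator is at least $\Pi(N)\exp\{n(M(\eta_0)-\varepsilon^2/4)\}$.
\item The ratio decays exponentially, using Assumption~\ref{ass:support}. The only difference between $F_\varepsilon$ and the set in (b) is the boundary, which is harmless.
\end{itemize}

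\emph{Part (c).} Since $\|q_\eta-c_0\|_1\le2$ always, convexity gives
\[
\|\overline q_n-c_0\|_1\le\int\|q_\eta-c_0\|_1\,\hPi(d\eta)\le\varepsilon+2\hPi(F_\varepsilon).
\]
Letting $n\to\infty$ and then $\varepsilon\to0$ completes the proof.
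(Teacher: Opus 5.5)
Your proof is correct and follows the paper's route: uniform closeness of $\hU_i$ via Glivenko--Cantelli, an interior/boundary split with the envelope, the bound $\widehat F_j(X_{ij})\ge\mathcal K_j(0)R_{ij}/n$ reducing kernels to the rank grid, a Wald-type uniform law, Pinsker with the exponential ratio bound, and convexity for (c); the only variation is that the paper controls the boundary indices by a bounded empirical second moment of $E(\hU_i)$ plus Cauchy--Schwarz, whereas you bound first-moment tail sums directly. That works, but on $\{\hU_i\notin[\delta,1-\delta]^d\}$ the envelope also contains coordinates that are not near the boundary, so besides your one-coordinate sums you need the term $\{A+Bd\,b(\delta)\}\,n^{-1}\#\{i:\hU_i\notin[\delta,1-\delta]^d\}=O(\delta\log(1/\delta))$, which the Cauchy--Schwarz route absorbs automatically.
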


The envelope~\eqref{eq:envelope} is the only requirement on the copula family beyond continuity. It allows logarithmic divergence at the boundary and involves no derivatives. The key step bounds the empirical second moment of $E(\hU_i)$ (Lemma~S1.1 in the Supplementary Material), which then controls the observations near the boundary (Lemma~S1.2); for ranks the moment is a Riemann sum, and for kernel estimates monotonicity of $\mathcal K_j$ reduces the bound to the rank case, so the bandwidth need only tend to zero. Identifiability is not assumed: Assumption~\ref{ass:support} may hold at a representation of $c_0$ with zero weights or repeated components.

Fitness and health measurements are usually compared within sex and age group, whose marginal distributions differ sharply. Theorem~\ref{thm:consistency} extends to such standardization. If ranks or kernel estimates are computed within the strata of an observed variable $S$ with finitely many values, its conclusions hold with target $c_0=\sum_s\Pr(S=s)c_{0\mid s}$, the stratum-weighted average of the conditional copula densities (Proposition~S1.3 in the Supplementary Material). Assumption~\ref{ass:support} must then hold for this pooled density. If, for instance, two strata follow Clayton copulas with different parameters, $c_0$ is a two-component Clayton mixture, which a model with one component per family cannot represent exactly.

\subsection{Families and derived quantities}

For $0<\kappa<1$ let $\Rset$ be the set of $d\times d$ correlation matrices whose smallest eigenvalue is at least $\kappa$.

\begin{corollary}\label{cor:families}
Fix $d\ge2$ and $K\ge1$. Let $\Theta$ be a finite union of the following compact component sets: Gaussian copulas with $R\in\Rset$; Student $t$ copulas with $R\in\Rset$ and $\nu\in[\nu_1,\nu_2]\subset(0,\infty)$; exchangeable Clayton, Gumbel and Frank copulas with parameters in $[0,M_{\mathrm C}]$, $[1,M_{\mathrm G}]$ and $[0,M_{\mathrm F}]$, where the endpoints $0$, $1$ and $0$ denote independence and, for $d=2$, the Frank interval may be any compact interval. Then the mixture model~\eqref{eq:mixture} satisfies Assumption~\ref{ass:envelope}. If $\Pi$ is the product of a Dirichlet distribution with positive parameters on $\Delta_{K-1}$ and distributions with full support on $\Theta$, the prior condition in Assumption~\ref{ass:support} holds at every $\eta_0$.
\end{corollary}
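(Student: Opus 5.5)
The plan is to reduce the mixture case to single components. Then each family is bounded separately on its compact parameter set, and the prior claim is checked at the end.

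\emph{Reduction to components.} If every component satisfies $|\log c_\theta(u)|\le E(u)$ uniformly over $\theta$ in its compact set, then so does every mixture. The upper bound follows from $\log\sum_k w_kc_{\theta_k}\le\max_k\log c_{\theta_k}$. The lower bound follows from $\log\sum_kw_kc_{\theta_k}\ge\min_k\log c_{\theta_k}$, which holds because the weights sum to one. Taking the largest $A$ and $B$ over the finitely many families, it suffices to prove two-sided envelopes family by family.

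For continuity, each density is jointly continuous in $(\theta,u)$ on $\Theta\times(0,1)^d$ and strictly positive. This includes the independence endpoints, via the limits of the Archimedean formulas. Hence $\log q_\eta(u)$ is continuous in $(w,\theta,u)$.

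\emph{Elliptical families.} Write $x_j=\Phi^{-1}(u_j)$. The Gaussian log density is
\[
-\tfrac12\log\det R-\tfrac12x^\top(R^{-1}-I)x .
\]
With eigenvalues of $R$ in $[\kappa,d]$, this is bounded in absolute value by $C(1+\|x\|^2)$. The Mills ratio gives $\Phi^{-1}(t)^2\le 2b(t)+C$, which yields the envelope.

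For Student $t$, the log density is a constant (continuous in $\nu$ and $R$) plus two terms. The first is $-\frac{\nu+d}2\log(1+x^\top R^{-1}x/\nu)$. The second is $\frac{\nu+1}2\sum_j\log(1+x_j^2/\nu)$, with $x_j=t_\nu^{-1}(u_j)$. Since $|t_\nu^{-1}(t)|\asymp\{t(1-t)\}^{-1/\nu}$ uniformly for $\nu\in[\nu_1,\nu_2]$, each logarithm is $O(1+b(u_j))$. Also $\log(1+x^\top R^{-1}x/\nu)\le C+\log(1+\|x\|^2)\le C+\sum_j\log(1+x_j^2)$.

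\emph{Archimedean families.} For Clayton with $\theta\in[0,M_{\mathrm C}]$, the log density is
\[
\textstyle\sum_{k<d}\log(1+k\theta)-(\theta+1)\sum\log u_j-(d+1/\theta)\log S,\qquad S=\sum u_j^{-\theta}-d+1 .
\]
Here $S\ge1$ and $\log S\le\log d+\theta\max_j(-\log u_j)$. The difficulty is that $\log S/\theta$ must be controlled uniformly as $\theta\to0$. For this I use $\frac1\theta\log(1+\sum_j(u_j^{-\theta}-1))\le\sum_j\frac{u_j^{-\theta}-1}{\theta}$, which is not uniformly bounded. I will instead use concavity to bound it by $\max_j(-\log u_j)+\frac{\log d}{\theta}$ for large $\theta$, and by a mean-value bound uniform in $\theta$ on $[0,1]$. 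Both bounds give at most $C\sum_j(1+|\log u_j|)$.

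For Frank, the density is explicit in terms of $e^{-\theta u_j}$ on a bounded domain in $u$. Frank tails are bounded, so the density is bounded above and below by positive constants on $[0,1]^d$ for each $\theta$, jointly continuous in $\theta$, and positive for $d\ge3$ with $\theta\ge0$. Compactness then gives a constant envelope.

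Gumbel is the main obstacle. Its $d$-dimensional density is $C(u)\prod_j\frac{(-\log u_j)^{\theta-1}}{u_j}\,S^{-d+1/\theta}\,P_{d,\theta}(S^{1/\theta})$, where $S=\sum_j(-\log u_j)^\theta$ and $P_{d,\theta}$ is a polynomial with positive coefficients. I would bound $\log C(u)=-S^{1/\theta}\ge-\sum_j(-\log u_j)$. I would bound $(\theta-1)\log(-\log u_j)$ both above and below by $C(1+b(u_j))$; near $u_j\to1$, $\log(-\log u_j)\approx\log(1-u_j)$. Finally, I would sandwich $S^{1/\theta}$ between $\max_j(-\log u_j)$ and $d\max_j(-\log u_j)$, so that $\log S$ and $\log P$ are $O(1+\sum_j|\log(-\log u_j)|)$. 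All constants must be kept continuous in $\theta\in[1,M_{\mathrm G}]$, using positivity of the polynomial's lowest and highest coefficients.

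\emph{Prior support.} An open neighborhood $N$ of $\eta_0=(w^0,\theta^0_k)$ contains a product $N_w\times\prod_kN_k$. The Dirichlet distribution has positive density on the relative interior of the simplex, and any open set in $\Delta_{K-1}$ meets that interior, so $N_w$ has positive prior mass. Full support on $\Theta$ gives positive mass to each $N_k$. The product measure therefore gives $\Pi(N)>0$.
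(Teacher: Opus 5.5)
Your route is the paper's: convexity reduces the mixture to its components, each family gets a two-sided logarithmic envelope from its explicit density, and the prior claim comes from positivity of the Dirichlet density on the relative interior. Two steps, however, do not go through as written.

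In the Clayton case, the whole difficulty near independence is a bound $(d+1/\theta)\log S\le C\sum_jL_j$, with $L_j=-\log u_j$, that holds uniformly in $\theta$. Saying ``a mean-value bound uniform in $\theta$ on $[0,1]$'' leaves this unproved. The natural mean-value bound on the individual terms, $(u_j^{-\theta}-1)/\theta=L_ju_j^{-\xi}$ with $0<\xi<\theta$, fails for exactly the reason you gave for discarding the first bound. What is needed is $\log S\le\theta\sum_jL_j$, i.e.\ $S\le\prod_ju_j^{-\theta}$. This follows by induction on $d$ from $\alpha+\beta-1\le\alpha\beta$ for $\alpha,\beta\ge1$. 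Equivalently, $\partial_\theta\log S=\sum_jL_ju_j^{-\theta}/S\le\sum_jL_j$, because $S\ge\max_ju_j^{-\theta}$. It gives $\log c_\theta\ge\sum_{m<d}\log(1+m\theta)-(d-1)\theta\sum_jL_j$ for all $\theta\in(0,M_{\mathrm C}]$ at once, so no case split is needed.

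In the Gumbel case, the coefficients of $P_{d,\theta}$ are only nonnegative, not positive. The lowest one is proportional to $\prod_{m=1}^{d-1}(m-1/\theta)$ and vanishes at the independence endpoint $\theta=1$; for $d=2$, $P_{2,\theta}(y)=y+\theta-1$. So it cannot supply a lower bound uniform on $[1,M_{\mathrm G}]$. You do not need it: the top coefficient is bounded away from zero (it equals $1$ in your normalization). Hence $\log P_{d,\theta}(y)\ge-C-(d-1)|\log y|$, and your sandwich already controls $|\log y|$ for $y=S^{1/\theta}$.

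The remaining steps are correct but asserted rather than shown. For Frank, you need an argument for positivity and joint continuity on $[0,M_{\mathrm F}]\times[0,1]^d$, including $\theta=0$ and, for $d=2$, negative $\theta$; ``Frank tails are bounded'' is not one. The paper uses the series $c_\theta(u)=(\theta/q)^{d-1}e^{-\theta\sum_ju_j}\sum_{m\ge1}m^{d-1}s^{m-1}$, with $q=1-e^{-\theta}$ and $s=\prod_j(1-e^{-\theta u_j})/q^{d-1}\in[0,q]$, where $q\le1-e^{-M_{\mathrm F}}<1$; for $d=2$ it uses the closed-form bivariate density. For Student $t$, only an upper bound on $|t_\nu^{-1}(t)|$ is needed, but its uniformity over $[\nu_1,\nu_2]$ must be proved. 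Markov's inequality with a moment of order $p=\nu_1/2$ gives this directly: $\log\{1+t_\nu^{-1}(t)^2\}\le C+(2/p)\,b(t)$. Continuity of the Clayton density at its independence endpoint also deserves a line, e.g.\ $\log S=\theta\sum_jL_j+O(\theta^2)$ locally uniformly as $\theta\downarrow0$.
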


Student components may have different degrees of freedom, and no lower bound on the weights is needed, since convex combinations inherit the envelope. Compactness cannot be dropped: no uniform envelope exists if Gaussian correlations may approach $\pm1$ or Clayton parameters may diverge.

\begin{proposition}[Derived quantities]\label{prop:functionals}
Let $g$ be a continuous map from $H$ into a metric space with $g(\eta)=g(\eta')$ whenever $q_\eta=q_{\eta'}$. Under the conditions of Theorem~\ref{thm:consistency}, $\hPi\{\eta:\operatorname{dist}(g(\eta),g(\eta_0))>\varepsilon\}\to0$ almost surely for every $\varepsilon>0$.
\end{proposition}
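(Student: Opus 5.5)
The plan is a compactness argument that turns the $L^1$ concentration in Theorem~\ref{thm:consistency}(b) into concentration of $g$. The key intermediate claim is deterministic: for every $\varepsilon>0$ there is $\delta>0$ such that
\[
\|q_\eta-c_0\|_1\le\delta\quad\Longrightarrow\quad \operatorname{dist}(g(\eta),g(\eta_0))\le\varepsilon .
\]
Once this is in hand, the event $\{\eta:\operatorname{dist}(g(\eta),g(\eta_0))>\varepsilon\}$ lies inside $\{\eta:\|q_\eta-c_0\|_1>\delta\}$. Its pseudo-posterior mass therefore tends to zero almost surely by Theorem~\ref{thm:consistency}(b), and the almost-sure null set does not depend on $\varepsilon$.

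To prove the claim, argue by contradiction. Suppose there are $\varepsilon>0$ and a sequence $\eta_m\in H$ with $\|q_{\eta_m}-c_0\|_1\to0$ but $\operatorname{dist}(g(\eta_m),g(\eta_0))>\varepsilon$ for all $m$. Since $H$ is a compact metric space, a subsequence converges, say $\eta_m\to\eta^*\in H$. By continuity of $g$, the limit satisfies $\operatorname{dist}(g(\eta^*),g(\eta_0))\ge\varepsilon$. It remains to show $q_{\eta^*}=c_0=q_{\eta_0}$; the hypothesis on $g$ then gives $g(\eta^*)=g(\eta_0)$, which is a contradiction.

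To identify $q_{\eta^*}$, use Assumption~\ref{ass:envelope}. The map $(\eta,u)\mapsto\log q_\eta(u)$ is continuous, so $q_{\eta_m}(u)\to q_{\eta^*}(u)$ pointwise on $(0,1)^d$. Along a further subsequence, $L^1$ convergence gives $q_{\eta_m}\to c_0$ almost everywhere. Hence $q_{\eta^*}=c_0$ almost everywhere. Since both are densities, "$q_\eta=q_{\eta'}$" should be read as equality as elements of $L^1$. If the paper means pointwise equality, note that $c_0=q_{\eta_0}$ is continuous by the same assumption, so almost-everywhere equality of two continuous functions on the open cube upgrades to equality everywhere. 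Pointwise convergence plus Scheffé's lemma would also give $L^1$ convergence, though this is not needed.

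The only delicate point is this identification step, which requires reconciling the notion of equality $q_\eta=q_{\eta'}$ with $L^1$ limits. Continuity in $u$ from Assumption~\ref{ass:envelope} settles it. Everything else is the routine compactness argument and a direct application of Theorem~\ref{thm:consistency}(b).
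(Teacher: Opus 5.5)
Your proposal is correct and follows the paper's proof essentially step for step: the same deterministic $\delta$--$\varepsilon$ claim, proved by contradiction via sequential compactness of $H$ and continuity of $g$, followed by the inclusion into the event controlled by Theorem~\ref{thm:consistency}(b). The only cosmetic difference is that you identify $q_{\eta^*}=c_0$ through pointwise convergence and an almost everywhere convergent subsequence, whereas the paper cites $L^1$ continuity of $\eta\mapsto q_\eta$ (Scheff\'e's lemma); both arguments then use continuity of the densities to pass from almost everywhere to everywhere equality.
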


Many quantities of interest have this form. They are defined through the copula itself and, for the families of Corollary~\ref{cor:families}, depend continuously on $\eta$ (Supplementary Section~S1.6): Kendall's $\tau$ and Spearman's $\rho$ of any pair, conditional probabilities $\Pr(U_d>t\mid U_{-d}=a)$ at interior points $a$, the conditional tail probabilities
\begin{equation}\label{eq:chi}
\begin{aligned}
\chi_L(t)&=\Pr(U_2\le t\mid U_1\le t)=\frac{C(t,t)}t,\\
\chi_U(t)&=\Pr(U_2>1-t\mid U_1>1-t)=\frac{2t-1+C(1-t,1-t)}t
\end{aligned}
\end{equation}
of a bivariate margin at fixed $t\in(0,1)$ \citep{Coles1999}, and their limits as $t\downarrow0$, the lower and upper tail-dependence coefficients $\lambda_L$ and $\lambda_U$. For a bivariate mixture, $\lambda_L=\sum_kw_k\lambda_L(c_{\theta_k})$ and $\lambda_U=\sum_kw_k\lambda_U(c_{\theta_k})$; Clayton components contribute $\lambda_L=2^{-1/\theta}$, Gumbel components contribute $\lambda_U=2-2^{1/\gamma}$, and Frank and Gaussian components contribute nothing \citep{Nelsen2006,Joe2014}. Tail coefficients are not continuous functionals of the density on the set of all copulas, and compactness of the parameter set does not by itself make them so. What Proposition~\ref{prop:functionals} uses is that, for the families of Corollary~\ref{cor:families}, the component coefficients and hence $\lambda_L$ and $\lambda_U$ are continuous in $\eta$.

\subsection{Identifiability of heterogeneous mixtures}

Weights and component parameters are functions of the density only if the mixture is identifiable, that is, if any finitely many distinct members of the component family are linearly independent \citep{Yakowitz1968}. For Gaussian copulas, and for Student copulas with a common known $\nu$, this follows from classical results for elliptical mixtures \citep{Yakowitz1968,Holzmann2006}. Archimedean and elliptical families, however, share the independence copula and have no common location--scale form.

\begin{theorem}\label{thm:identification}
For every $d\ge2$, the family consisting of the independence copula, the exchangeable Clayton copulas with $\theta>0$, the Gumbel copulas with $\gamma>1$, the Frank copulas with $\theta>0$ (any $\theta\ne0$ if $d=2$) and the Gaussian copulas with nonsingular correlation matrices is finitely linearly independent.
\end{theorem}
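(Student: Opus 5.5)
We argue by contradiction. Suppose $\sum_{k=1}^m a_kc_k\equiv0$ on $(0,1)^d$, where the $c_k$ are distinct members of the dictionary and every $a_k\ne0$. The idea is to approach the face $\{u_1=0\}$ along the path $u=(t,a)$ with $a\in(0,1)^{d-1}$ fixed and $t\downarrow0$. For each family we compute an asymptotic expansion of the form
\[
c(t,a)\sim \alpha(a)\,t^{p}\,\{\log(1/t)\}^{q}\exp\{r\sqrt{2\log(1/t)}\}.
\]
The scales $t^p(\log 1/t)^q e^{r\sqrt{\cdot}}$ are totally ordered lexicographically in $(p,q,r)$ as $t\to0$. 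We then group the terms by their leading scale and show that the smallest group must cancel by itself. Induction on $m$ then forces every $a_k$ to vanish.

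\textbf{Step 1 (asymptotics per family).}
\begin{itemize}
\item \emph{Clayton.} A direct computation gives $c_\theta(t,a)\sim \alpha_\theta(a)\,t^{(d-1)\theta}$, so distinct $\theta$ give distinct pure powers.
\item \emph{Frank.} The density extends analytically to $u_1=0$ with a strictly positive limit, so the exponent is $p=0$ with a nonconstant coefficient $\alpha_\theta(a)$.
\item \emph{Independence.} Here $p=q=r=0$ and $\alpha\equiv1$.
\item \emph{Gumbel.} Writing $x=-\log t$, we have $(x^\gamma+\sum_{j\ge2}y_j^\gamma)^{1/\gamma}=x+O(x^{1-\gamma})$ with $y_j=-\log a_j$. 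This yields a leading term $t^{p}x^{q}$ with $(p,q)$ depending on $\gamma$ and $d$. A sub-leading factor $\exp\{O(x^{1-\gamma})\}$ has coefficients depending on $\gamma$, which separates distinct $\gamma$ sharing the same leading order.
\item \emph{Gaussian.} Put $z_1=\Phi^{-1}(t)\approx-\sqrt{2\log(1/t)}$ and $P=R^{-1}$. The log density is $-\tfrac12(P_{11}-1)z_1^2-z_1\sum_{j\ge2}P_{1j}z_j+O(1)$, with $z_j=\Phi^{-1}(a_j)$. Hence $p=P_{11}-1$, $q$ comes from the Mills-ratio correction, and $r=\sum_{j\ge2}P_{1j}\Phi^{-1}(a_j)$ depends linearly on $\Phi^{-1}(a)$.
\end{itemize}

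\textbf{Step 2 (peeling).} Fix a generic $a$ and let $(p^*,q^*,r^*)$ be the smallest scale attained by any term. Dividing the relation by that scale and letting $t\to0$ shows that the coefficients of this group satisfy $\sum a_k\alpha_k(a)=0$. We need this identity for all $a$ in an open set. Scales that coincide for one $a$ but not for nearby $a$ give no trouble: the Gaussian $r$ is a nonconstant affine function of $\Phi^{-1}(a)$, so on an open dense set of $a$ the grouping is locally constant. Within a group, the coefficient functions $\alpha_k(a)$ are themselves lower-dimensional objects. For Clayton they are explicit powers of $\sum_{j\ge2}(a_j^{-\theta}-1)$ times products of $a_j^{-\theta-1}$. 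For Gaussians the data $(P_{11},P_{1,-1})$ already match inside the group, and the coefficient is a Gaussian-type function in $\Phi^{-1}(a)$. We then apply the same argument again, letting $a_2\downarrow0$ (an induction on the number of free coordinates), to conclude that the terms within the group are linearly independent. Therefore every $a_k$ in the smallest group vanishes. We remove that group and repeat.

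\textbf{Step 3 (within-family cases and symmetry).} Gaussian copulas with the same $P_{11}$ and first row but different $R$ are distinguished in the last stage by the classical independence of Gaussian densities after the change of variables $z=\Phi^{-1}(u)$ \citep{Yakowitz1968}. Indeed, $c_R(u)\prod_j\phi(z_j)$ is the $N(0,R)$ density, and distinct centred Gaussians are linearly independent. For Frank copulas with $d=2$ and $\theta<0$, a pure-Frank group can be handled through the explicit rational form of $c_\theta$ in $e^{-\theta u}$: for instance, the pole locations in complex $u_2$ at fixed $u_1$ differ for distinct $\theta$.

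The main obstacle I expect is Step 2 when scales tie across families, for example a Clayton power $(d-1)\theta$ equal to a Gaussian $P_{11}-1$ with $r=0$ and matching $q$. To break such ties I would first use the $a$-dependence of $r$ for Gaussians, which vanishes identically only if $P_{1,-1}=0$. If that fails, I would switch to the diagonal path $u=(t,\dots,t)$. Along it the Archimedean densities behave like $t^{1-d}$ or $t^{0}$ (Frank) with $\theta$-dependent constants, while the Gaussian exponent is $\mathbf 1^\top R^{-1}\mathbf 1-d$ with a $\log$-power correction. A second path of this kind should separate the remaining coincidences.
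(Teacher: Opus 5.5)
Your face-asymptotics strategy is viable, and for $d=2$ it is essentially the paper's proof. There, the Frank and independence limits as $u\downarrow0$ are distinct exponentials in $v$, Gumbel terms decay like powers of $L=\log(1/t)$, and Clayton and Gaussian terms are separated by the linear term in $\Phi^{-1}(u)$, which the paper makes nonzero by fixing $v=\Phi(1)$. As written, however, your peeling step contains one error and one unresolved point.

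\emph{The ordering is wrong.} The order is lexicographic in $(p,r,q)$, not $(p,q,r)$. The logarithm of a scale is $-pL+r\sqrt{2L}+q\log L$, and $\sqrt L\gg\log L$. With your order, a Clayton and a Gaussian term with equal $p$ are ranked by $q$. When $r(a)<0$ you would then divide by the smaller term, and the limit argument breaks.

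\emph{The tie you call the main obstacle cannot occur.} You should show this rather than propose a diagonal path, whose exponents are also off: along the diagonal a Gumbel density behaves like $t^{d^{1/\gamma}-d}$, not $t^{1-d}$. The Mills-ratio correction gives $q=(P_{11}-1)/2$. By the Schur complement, $P_{11}=1/(1-R_{1,-1}R_{-1,-1}^{-1}R_{-1,1})$, so $P_{1,-1}=0\iff R_{1,-1}=0\iff P_{11}=1$. Hence every Gaussian term with $p>0$ has nonconstant $r$. A Clayton term ($p=(d-1)\theta>0$, $r=q=0$) is therefore never tied with a Gaussian term. Gumbel terms ($p=r=0$, $q=-(d-1)(\gamma-1)$) are tied with nothing, so the sub-leading Gumbel factor plays no role.

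The genuine gap is the group at scale $(0,0,0)$ when $d\ge3$. This is the largest scale, so the group is peeled first whenever it is nonempty. It contains:
\begin{itemize}
\item every Frank term, with coefficient $\{\theta/(1-e^{-\theta})\}^{d-1}\exp\{-\theta(a_2+\cdots+a_d)\}$;
\item the independence copula;
\item every Gaussian with $R_{1,-1}=0$, whose coefficient is the $(d-1)$-dimensional Gaussian copula density of $R_{-1,-1}$.
\end{itemize}
Your Step 2 treats only Clayton and Gaussian groups. The proposed induction on the number of free coordinates does not apply as stated: the Frank coefficients are not $(d-1)$-dimensional Frank copulas, so an induction hypothesis about the dictionary says nothing about them.

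It can be repaired by iterating the face limit on the enlarged family of exponentials in $\sum_ja_j$ and Gaussian copula densities. At each stage the exponentials converge to exponentials in fewer variables, Gaussians with nonzero first row vanish, and the others drop exactly to lower-dimensional Gaussian copulas. The iteration ends in $\sum_\theta b_\theta e^{-\theta a_d}+b_0=0$. After that the group contains only Gaussian terms, and your Step 3 applies.

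The paper avoids this bookkeeping. For $d>2$ it integrates out $u_3,\ldots,u_d$. It uses that bivariate margins of exchangeable Archimedean copulas stay in their family with the same parameter, while Gaussian margins become bivariate Gaussian. It then kills the Archimedean coefficients by the case $d=2$ and finishes with the Vandermonde argument for Gaussians.

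Finally, your pole argument for Frank with $\theta<0$ is unnecessary. The face limit $\theta(1-e^{-\theta})^{-1}e^{-\theta v}$ holds for either sign, so these terms are already separated as distinct exponentials.
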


The proof separates the families by their boundary behavior. As $u\downarrow0$ with $v$ fixed, Frank, Gumbel and Clayton or Gaussian densities have distinct limits or decay rates, and the remaining Clayton and Gaussian terms are ordered along the curve $(\Phi(-x),\Phi(1))$; in higher dimensions, integrating out all but two coordinates removes the Archimedean terms, and the Gaussian terms are then separated in the full dimension.

To state the consequences, identify parameter values that give the same density (in particular the four representations of independence), so that the components form a compact set $\mathcal C=\{c_\theta:\theta\in\Theta\}$ of $L^1$. Let $G_\eta=\sum_kw_k\delta_{c_{\theta_k}}$ be the mixing measure of~\eqref{eq:mixture} and $W$ the Wasserstein distance on probability measures on $\mathcal C$ with the $L^1$ metric \citep{Nguyen2013}.

\begin{corollary}\label{cor:mixing}
Let $\Theta$ be as in Corollary~\ref{cor:families}, consisting either of components from the families in Theorem~\ref{thm:identification}, or of Gaussian components, or of Student components with a common known $\nu$. Under the conditions of Theorem~\ref{thm:consistency}, almost surely:
\begin{enumerate}[label=(\alph*)]
\item $\hPi\{\eta:W(G_\eta,G_{\eta_0})>\varepsilon\}\to0$ for every $\varepsilon>0$;
\item if $G_{\eta_0}$ has exactly $K$ atoms, the weights and components converge up to relabeling: for every $\varepsilon>0$, the pseudo-posterior probability of
\[
\min_\sigma\max_k\bigl\{|w_{\sigma(k)}-w_{0k}|+\|c_{\theta_{\sigma(k)}}-c_{\theta_{0k}}\|_1\bigr\}>\varepsilon
\]
tends to zero, the minimum being over permutations $\sigma$;
\item if a closed set $\mathcal C_F\subset\mathcal C$ contains no atom of $G_{\eta_0}$, then $\hPi\{\eta:G_\eta(\mathcal C_F)>\varepsilon\}\to0$ for every $\varepsilon>0$.
\end{enumerate}
\end{corollary}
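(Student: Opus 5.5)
The strategy is to reduce all three parts to Theorem~\ref{thm:consistency}(b) through one deterministic statement. On the compact set of mixing measures, convergence of densities in $L^1$ should force convergence in $W$. Write $\mathcal G$ for the set of probability measures on $\mathcal C$ with at most $K$ atoms. Because $\mathcal C$ is a compact subset of $L^1$, the set $\mathcal G$ is compact under $W$: weights lie in a simplex, atoms lie in $\mathcal C$, and $W$ metrizes weak convergence on a compact space. The map $G\mapsto q_G=\int c\,G(dc)$ from $(\mathcal G,W)$ to $L^1$ is continuous, since $\|q_G-q_{G'}\|_1\le W(G,G')$ by Kantorovich duality applied to the $1$-Lipschitz maps $c\mapsto\int f c$ with $\|f\|_\infty\le1$.

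Injectivity of this map is where identifiability enters. Suppose $q_G=q_{G'}$ with $G,G'\in\mathcal G$. Then $\sum_k w_k c_k-\sum_l w'_l c'_l=0$ is a finite linear relation among members of the dictionary. I first merge coinciding components, using the identification of all representations of independence. Finite linear independence then forces the combined coefficient of each distinct density to vanish, so $G=G'$. The three cases of $\Theta$ need separate inputs. For the families of Theorem~\ref{thm:identification}, that theorem applies directly, noting that the Frank interval $[0,M_{\mathrm F}]$, or any compact interval when $d=2$, and the endpoints $0,1,0$ all lie in the dictionary. For Gaussian components, and for Student components with a common known $\nu$, I would cite \citet{Yakowitz1968,Holzmann2006}. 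Linear independence of Gaussian copula densities follows from that of the corresponding centred Gaussian densities in normal scores. The same argument applies to $t_\nu$.

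Since the map is a continuous injection from a compact space to a Hausdorff space, it is a homeomorphism onto its image. Hence for every $\varepsilon>0$ there is $\delta>0$ such that $W(G,G_{\eta_0})>\varepsilon$ implies $\|q_G-c_0\|_1>\delta$. Part (a) follows from Theorem~\ref{thm:consistency}(b), because $\{W(G_\eta,G_{\eta_0})>\varepsilon\}\subset\{\|q_\eta-c_0\|_1>\delta\}$.

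For (b), suppose $G_{\eta_0}$ has exactly $K$ distinct atoms $c_{0k}$ with weights $w_{0k}>0$. Let $\rho=\min_{k\ne l}\|c_{0k}-c_{0l}\|_1$ and $m=\min_k w_{0k}$. I would argue by compactness again, showing that $W(G_\eta,G_{\eta_0})\to0$ forces the labelled quantity in (b) to tend to zero. Take a sequence with $W\to0$ and extract a subsequence along which $(w,\theta)$ converges to $(w^*,c^*)$. Continuity of $W$ gives $\sum_k w^*_k\delta_{c^*_k}=G_{\eta_0}$. Since this limit has $K$ distinct atoms and only $K$ labels are available, each $c^*_k$ must be a different true atom with $w^*_k$ equal to its weight, which gives a permutation. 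A contradiction argument then yields a uniform $\delta'$, and (b) follows from (a).

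For (c), if $\mathcal C_F$ is closed it has positive distance $r$ from the finite atom set of $G_{\eta_0}$. Any coupling of $G_\eta$ and $G_{\eta_0}$ moves the mass $G_\eta(\mathcal C_F)$ a distance at least $r$, so $W\ge r\,G_\eta(\mathcal C_F)$, and (a) applied with $\varepsilon r$ completes the proof.

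I expect the main obstacle to be the injectivity step. The difficulty is handling coincident representations, above all the independence copula appearing as an endpoint of several families, so that merging is done consistently before linear independence is invoked. A second delicate point is transferring the elliptical identifiability results to copula densities.
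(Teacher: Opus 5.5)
Your proposal is correct and follows the paper's argument. The set of mixing measures with at most $K$ atoms is compact, and $G\mapsto q_G$ is continuous and injective on it (by Theorem~\ref{thm:identification} or the classical elliptical results), hence a homeomorphism onto its image; with $L^1$ consistency this gives (a), and deterministic Wasserstein bounds deduce (b) and (c) from (a). The differences are only technical: you get continuity from the Lipschitz bound $\|q_G-q_{G'}\|_1\le W(G,G')$ rather than pointwise convergence and Scheff\'e's lemma, you prove (b) by a subsequence argument rather than the paper's explicit Kantorovich--Rubinstein bump functions (which give explicit constants), and you prove (c) with couplings rather than a dual Lipschitz function.
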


By Theorem~\ref{thm:identification}, $G\mapsto q_G$ is a homeomorphism on mixing measures with at most $K$ atoms, so (a) follows from Proposition~\ref{prop:functionals}; (b) and (c) follow from (a) through the Kantorovich--Rubinstein representation of $W$. Under (b), the class probabilities $w_kc_{\theta_k}(u)/q_\eta(u)$ also converge (Proposition~S1.4).

Parts (a) and (c) hold even when $K$ exceeds the true number of components, whereas (b) requires the exact order. If $\mathcal C_F$ collects the components of one family, $G_\eta(\mathcal C_F)$ is that family's total weight, and (c) says that a family absent from $c_0$ loses its weight, whatever the positive Dirichlet parameter. In this soft sense the posterior selects families, although no rate is implied. The condition fails if $c_0$ has an independence component and $\mathcal C_F$ contains independence, a limit shared by all four families; near independence the weight vanishes slowly. Overfitting within one family is not covered: a redundant component can duplicate a true one and share its weight, and whether it is emptied depends on the Dirichlet parameter \citep{Rousseau2011}. The parameter of a component whose weight tends to zero is not identified. Finally, because all components share the margins of $U$, a component describes how variables move together, not how well a person performs on any single test. The applications therefore report weights and tail quantities, not cluster memberships.

The four-family mixture used in the rest of the paper fixes one family per component. For this parameterization the conclusions of Corollary~\ref{cor:mixing} can be stated without relabeling, family by family.

\begin{corollary}[Four-family mixture]\label{cor:fourfamily}
Let $q_\eta=w_1c^{\mathrm C}_{\theta_1}+w_2c^{\mathrm G}_{\theta_2}+w_3c^{\mathrm F}_{\theta_3}+w_4c^{\mathrm N}_{\theta_4}$ be a mixture of exchangeable Clayton, Gumbel and Frank copulas and a Gaussian copula, with $\eta\in H=\Delta_3\times\Theta_1\times\cdots\times\Theta_4$, where $\Theta_1=[a_1,b_1]$ with $a_1>0$, $\Theta_2=[a_2,b_2]$ with $a_2>1$, $\Theta_3=[a_3,b_3]$ with $a_3>0$, and $\Theta_4\subset\Rset$ is compact. Let $\Pi$ be the product of a Dirichlet distribution with positive parameters and distributions with full support on $\Theta_1,\ldots,\Theta_4$, let $c_0=q_{\eta_0}$ with $\eta_0\in H$, and let $\hU_i$ be as in Theorem~\ref{thm:consistency}. Then the conclusions of Theorem~\ref{thm:consistency} and Proposition~\ref{prop:functionals} hold and, almost surely, for $k=1,\ldots,4$ and every $\varepsilon>0$:
\begin{enumerate}[label=(\alph*)]
\item $\hPi\{\eta:|w_k-w_{0k}|>\varepsilon\}\to0$;
\item if $w_{0k}>0$, then $\hPi\{\eta:\|\theta_k-\theta_{0k}\|>\varepsilon\}\to0$.
\end{enumerate}
\end{corollary}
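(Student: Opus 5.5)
We would prove Corollary~\ref{cor:fourfamily} by reducing it to Corollary~\ref{cor:families}, Theorem~\ref{thm:identification} and Proposition~\ref{prop:functionals}, applied to suitable functionals $g$.

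\emph{Step 1: envelope and prior support.} The component sets $\Theta_1,\ldots,\Theta_4$ are compact subsets of the parameter ranges in Corollary~\ref{cor:families}. That corollary therefore gives Assumption~\ref{ass:envelope} and the prior condition of Assumption~\ref{ass:support} at $\eta_0$. The map $(\eta,u)\mapsto\log q_\eta(u)$ is continuous because each component density is jointly continuous and strictly positive. Hence Theorem~\ref{thm:consistency} and Proposition~\ref{prop:functionals} apply.

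\emph{Step 2: injectivity on the weights and on the supported parameters.} Because $a_1>0$, $a_2>1$ and $a_3>0$, no component equals the independence copula. The four components come from distinct families, so they are pairwise distinct members of the dictionary of Theorem~\ref{thm:identification}.

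Suppose $q_\eta=q_{\eta'}$. Then
\[
\sum_k w_kc_{\theta_k}-\sum_k w'_kc_{\theta'_k}=0
\]
is a finite linear relation among members of that dictionary. Group equal densities. Within family $k$, either $\theta_k=\theta'_k$, and the coefficient of that density is $w_k-w'_k$, or $\theta_k\neq\theta'_k$, and the two densities are distinct with coefficients $w_k$ and $-w'_k$. Densities from different families never coincide.

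Linear independence forces every coefficient to vanish. In the first case $w_k=w'_k$. In the second case $w_k=w'_k=0$. Hence $w_k(\eta)=w_k(\eta')$ for all $k$, and $\theta_k=\theta'_k$ whenever $w_k>0$.

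\emph{Step 3: part (a).} Take $g(\eta)=w_k$. It is continuous on $H$ and, by Step~2, constant on the level sets of $\eta\mapsto q_\eta$. Proposition~\ref{prop:functionals} then gives (a).

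\emph{Step 4: part (b).} This is the delicate step, because $\theta_k$ itself is not a function of the density when $w_k=0$. Proposition~\ref{prop:functionals} therefore cannot be applied to $g(\eta)=\theta_k$. We argue by compactness instead.

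Fix $k$ with $w_{0k}>0$. We claim that for every $\varepsilon>0$ there exists $\delta>0$ such that
\[
\{\eta:\|\theta_k-\theta_{0k}\|>\varepsilon\}\subset\{\eta:|w_k-w_{0k}|>w_{0k}/2\}\cup\{\eta:\|q_\eta-c_0\|_1>\delta\}.
\]
Suppose not. Then there is a sequence $\eta_m$ with $\|\theta_{k,m}-\theta_{0k}\|>\varepsilon$, $w_{k,m}\ge w_{0k}/2$ and $\|q_{\eta_m}-c_0\|_1\to0$. By compactness of $H$ we may pass to a limit $\eta^*$. The map $\eta\mapsto q_\eta$ is continuous into $L^1$, by Scheffé's lemma and pointwise continuity. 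Hence $q_{\eta^*}=c_0$, while $w_k^*\ge w_{0k}/2>0$ and $\theta^*_k\ne\theta_{0k}$. This contradicts Step~2.

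Both sets on the right of the inclusion have pseudo-posterior probability tending to zero, by (a) and by Theorem~\ref{thm:consistency}(b). This proves (b).

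The same compactness argument, applied to a continuous function of the density, underlies Proposition~\ref{prop:functionals}, so the only genuinely new ingredient is the injectivity in Step~2.
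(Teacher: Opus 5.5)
Your proof is correct, but it takes a more elementary route than the paper.

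\emph{The paper's route.} It works with mixing measures. It shows that the four families form pairwise disjoint compact sets $\mathcal C_1,\ldots,\mathcal C_4$ in $L^1$ at positive distance $d_{\min}$. It then transfers the Wasserstein consistency of Corollary~\ref{cor:mixing}(a) to $G_\eta=\sum_kw_k\delta_{c^{(k)}_{\theta_k}}$. Part (a) follows from the Kantorovich--Rubinstein bound $|w_k-w_{0k}|\le W(G_\eta,G_{\eta_0})/d_{\min}$, and part (b) from a Lipschitz bump around $c_{0k}$ combined with the homeomorphism $\theta\mapsto c^{(k)}_\theta$.

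\emph{Your route.} You extract from Theorem~\ref{thm:identification} the injectivity statement that $q_\eta=q_{\eta'}$ forces $w=w'$, and $\theta_k=\theta'_k$ whenever $w_k>0$. You then apply Proposition~\ref{prop:functionals} directly to $g(\eta)=w_k$. For (b) you use a compactness argument on $H$, which correctly handles the fact that $\theta_k$ is not a function of the density when $w_k=0$.

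\emph{What each buys.} Your argument avoids the quotient space $\mathcal C$, the compactness of mixing measures under $W$, and duality altogether. The paper's argument yields in exchange a quantitative inequality tying weight errors to $W$. It also gives the structural fact that the families are separated in $L^1$, so a future contraction rate for $W$ would pass directly to the weights.

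\emph{Two small points in Step~2.} First, you assert rather than prove that distinct parameters within a family give distinct densities, and that members of different families are distinct functions. Both follow from Theorem~\ref{thm:identification} if it is read, as in its proof, with members indexed by family and parameter; you should say so. The paper also gives within-family injectivity separately, via strict monotonicity of Kendall's $\tau$ and the fact that a Gaussian copula determines $R$. Second, only the Archimedean components are kept away from independence. The Gaussian component is the independence copula if $I\in\Theta_4$. This is harmless, since independence is itself a member of the dictionary and no Archimedean component can coincide with it.
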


Excluding independence from the three Archimedean parameter sets makes the four families pairwise disjoint compact sets of densities in $L^1$, by Theorem~\ref{thm:identification}. A component can then approximate only its own family, so no relabeling is needed, and (a) also covers a family that is absent from $c_0$. The priors of Section~\ref{sec:priors} satisfy these conditions, and the weights reported in Section~\ref{sec:applications} therefore estimate family-specific quantities. As before, no rate is implied.

\section{Computation}\label{sec:computation}

\subsection{Priors and sampler}\label{sec:priors}\label{sec:sampler}

All fits use a Dirichlet$(\alpha,\ldots,\alpha)$ prior on the weights and independent uniform priors on compact component parameter sets, so Corollary~\ref{cor:families} applies. For the four-family mixture we take $\alpha=1$. The parameter intervals are $[0.01,12]$ for Clayton, $[1.001,7]$ for Gumbel, $[0.01,30]$ for Frank and $[-0.95,0.95]$ for the Gaussian correlation; together they cover Kendall's $\tau$ up to about $0.86$, and only the Gaussian component may carry negative dependence. Priors for the other experiments are listed in Supplementary Section~S2.

The standard sampler for mixtures, data augmentation, alternates between latent allocations, weights and component parameters \citep{Diebolt1994}. When copula components overlap, as they do near independence, the allocations are nearly random and the weights move slowly. We therefore sample the pseudo-posterior~\eqref{eq:posterior} of $\eta=(w,\theta)$ directly, without allocations. This marginal sampler (Algorithm~\ref{alg:sampler}) works in unconstrained coordinates: additive log-ratios of the weights and logits of the parameters' relative positions in their prior intervals. Occasional independence proposals from the prior let a component with small weight, whose parameter the data barely determine, move between distant regions. Proposal scales are adapted during burn-in only, so the retained draws come from a fixed Metropolis--Hastings kernel that leaves the pseudo-posterior invariant \citep{Haario2001,Roberts2009}. An iteration costs about twice as much as one of data augmentation; Section~\ref{sec:exp3} shows when this pays off. Tuning details are given in Supplementary Section~S2.

\begin{algorithm}[!htbp]
\small
\caption{Marginal Metropolis sampler for the pseudo-posterior of a $K$-component copula mixture}\label{alg:sampler}
\begin{algorithmic}[1]
\Require pseudo-observations $\hU_1,\ldots,\hU_n$; families $c^{(1)},\ldots,c^{(K)}$ with parameter intervals $[a_k,b_k]$; iterations $T$, burn-in $T_0<T$; period $m$ of independence proposals ($m=10$)
\State Initialize $(w,\theta)$; set $y_k=\log(w_k/w_K)$ for $k<K$ and $z_k=\operatorname{logit}\{(\theta_k-a_k)/(b_k-a_k)\}$
\State Target: $\log\pi(y,z)=\ell(\eta)+\alpha\sum_k\log w_k+\sum_k\log\{v_k(1-v_k)\}$, where $\ell(\eta)=\sum_i\log q_\eta(\hU_i)$ and $v_k=\operatorname{expit}(z_k)$
\For{$t=1,\ldots,T$}
\State Random-walk Metropolis updates of $y$, then of each $z_k$
\If{$t\equiv0 \pmod m$}
\State For each $k$, accept $\theta_k'\sim\mathrm U[a_k,b_k]$ with probability $\min\{1,e^{\ell(\eta')-\ell(\eta)}\}$
\EndIf
\State Block random-walk update of $(y,z)$
\State \textbf{if} $t\le T_0$ \textbf{then} adapt the proposal scales and block covariance \textbf{else} store $(w,\theta)$
\EndFor
\end{algorithmic}
\end{algorithm}

\subsection{Competitor and posterior summaries}\label{sec:summaries}

The competitor is multi-start maximum pseudo-likelihood (MPL), which maximizes the same pseudo-log-likelihood $\sum_i\log q_\eta(\hU_i)$ from ten random starting values, with the same pseudo-observations, mixture order and parameter bounds. Bayesian fits are summarized by the posterior predictive density $\overline q_n$ and by posterior means and 95\% credible intervals of the weights, the component Kendall's $\tau$ and the tail coefficients. Tail asymmetry is summarized by the posterior probability that one tail coefficient exceeds the other and, at finite levels, by the curves $\chi_L(t)$ and $\chi_U(t)$ in~\eqref{eq:chi}. Their empirical counterparts are the proportions $\#\{i:\hU_{i1}\le t,\hU_{i2}\le t\}/\#\{i:\hU_{i1}\le t\}$ and their upper-tail analogues; by Proposition~\ref{prop:functionals}, the posterior of $\chi_L(t)$ and $\chi_U(t)$ concentrates at the truth for each fixed $t$.

Models are compared out of sample. Each dataset is split once, within strata, into a training and an evaluation sample. The evaluation records are transformed by the training-sample empirical distribution functions, and each model is scored by the mean log posterior predictive copula density of the evaluation records, a strictly proper scoring rule \citep{Gneiting2007}, with paired standard errors for differences between models. For the tails, the evaluation records are re-ranked within strata, ties broken at random, and the observed $\widehat\chi_L(t)$ and $\widehat\chi_U(t)$ at $t\in\{0.05,0.10,0.20\}$ are compared with the predictions of the training fits.

\section{Simulation study}\label{sec:simulation}

\subsection{Design}

All experiments re-estimate the margins from each simulated sample. A fitted oracle uses the true $U_i$ but still estimates all mixture parameters; comparing it with the pseudo-posterior isolates the cost of estimating the margins. Density accuracy is measured by the Kullback--Leibler divergence $\KL(c_0,\widehat c\,)$, estimated from 16384 draws from $c_0$ shared across replicates and methods. For the Bayesian fits, $\widehat c$ is the posterior predictive density. Computing times refer to a laptop with an Intel Core i9-13980HX processor and 16 GB of memory, running Windows 11 with Python 3.11.5, NumPy 1.24.3 and SciPy 1.11.1. Each fit used a single thread, with up to twelve fits running in parallel.

\emph{Experiment 1} uses three two-component mixtures, each with weight $0.55$ on the first component: a bivariate Clayton--Gumbel mixture with $\theta=1.5$ and $\gamma=2$ (CG2), and five-dimensional Gaussian (N5) and Student $t_5$ (t5) mixtures with exchangeable correlations $0.10$ and $0.65$. The margins are lognormal, logistic, $t_4$, gamma and cubed normal. For each $n\in\{200,800,3200\}$ we draw 30 datasets per model. Four fits are compared: the pseudo-posterior with rank margins, the pseudo-posterior with Gaussian-kernel margins (Silverman's bandwidth), the oracle and rank-based MPL. The Bayesian fits use data augmentation with an additional block move (Supplementary Section~S2).

\emph{Experiment 2} uses the four-family bivariate mixture of Theorem~\ref{thm:identification}, with weights $0.4$, $0.3$, $0.2$ and $0.1$ on a Clayton, a Gumbel, a Frank and a Gaussian component. The weights are unequal so that the prior mean of the weights is not the truth. In the \emph{separated} scenario every component has Kendall's $\tau=0.6$ ($\theta_{\mathrm C}=3$, $\gamma=2.5$, $\theta_{\mathrm F}=7.93$, $\rho=0.81$). The components then differ in shape and tails, not in strength, and the tail coefficients are $\lambda_L=0.317$ and $\lambda_U=0.204$. In the \emph{near-independence} scenario every component has $\tau=0.1$ ($\theta_{\mathrm C}=0.22$, $\gamma=1.11$, $\theta_{\mathrm F}=0.91$, $\rho=0.16$), giving $\lambda_L=0.018$ and $\lambda_U=0.040$. For each $n\in\{500,2000,8000\}$ we draw 30 datasets per scenario and fit, with $K=4$, the rank-based pseudo-posterior (Algorithm~\ref{alg:sampler}; 8000 iterations, 2000 burn-in, thinning by six), the oracle and rank-based MPL. \emph{Experiment 3} reruns the first ten datasets of each cell of Experiment~2 with both samplers under identical settings.

\subsection{Experiment 1: two-component mixtures}

\begin{figure}[t]
\centering
\includegraphics[width=\textwidth]{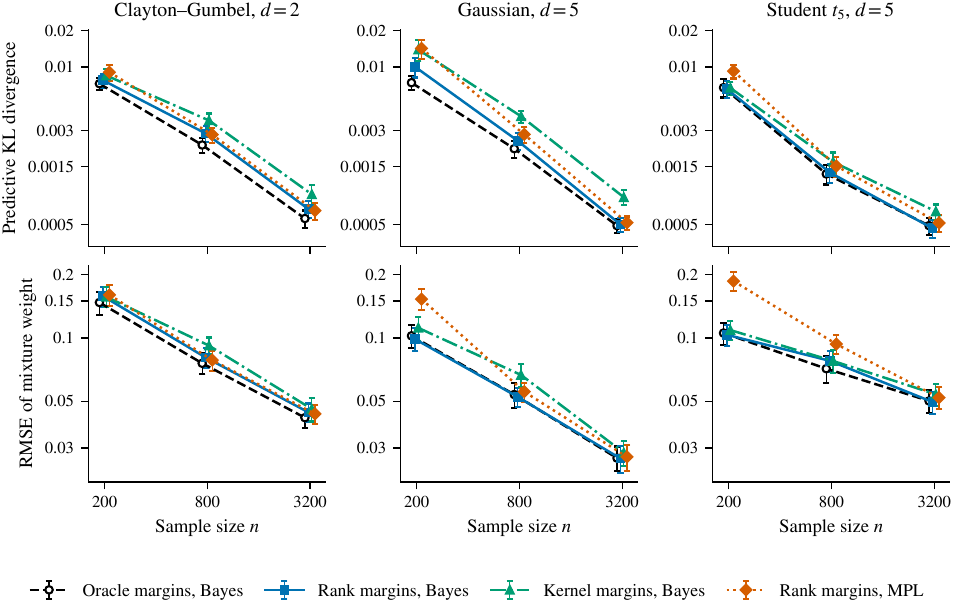}
\caption{Experiment 1. Predictive Kullback--Leibler divergence (top) and root mean squared error of the mixture weight (bottom), averaged over 30 replicates; bars show one standard error. Both axes are logarithmic; points are slightly offset horizontally.}
\label{fig:sim1}
\end{figure}

All errors decrease steadily with $n$ (Figure~\ref{fig:sim1}). The rank-based pseudo-posterior stays close to the oracle, so estimating the margins costs little once $n\ge800$. The pseudo-posterior and MPL differ mainly at $n=200$. There the posterior predictive density has a smaller divergence ($0.0101$ against $0.0143$ in N5, $0.0066$ against $0.0092$ in t5), and the posterior mean weight a markedly smaller error ($0.099$ against $0.153$ in N5, $0.104$ against $0.187$ in t5). With few observations the pseudo-likelihood is flat, and averaging over the posterior is more stable than committing to its maximizer. Part of the advantage in the weight, however, comes from shrinkage towards the prior mean $0.5$, which happens to be close to the true $0.55$. At $n=3200$ the two methods are indistinguishable. MPL is three to six times faster, at 0.2--1.3 seconds per fit with ten starts against 0.9--5.0 seconds for the sampler. Kernel margins are slightly less accurate than ranks. The gap is most visible for N5, whose cubed-normal margin has a density that is unbounded at zero and poorly followed by a smoothed estimate.

\subsection{Experiment 2: four-family mixtures}

\begin{figure}[t]
\centering
\includegraphics[width=\textwidth]{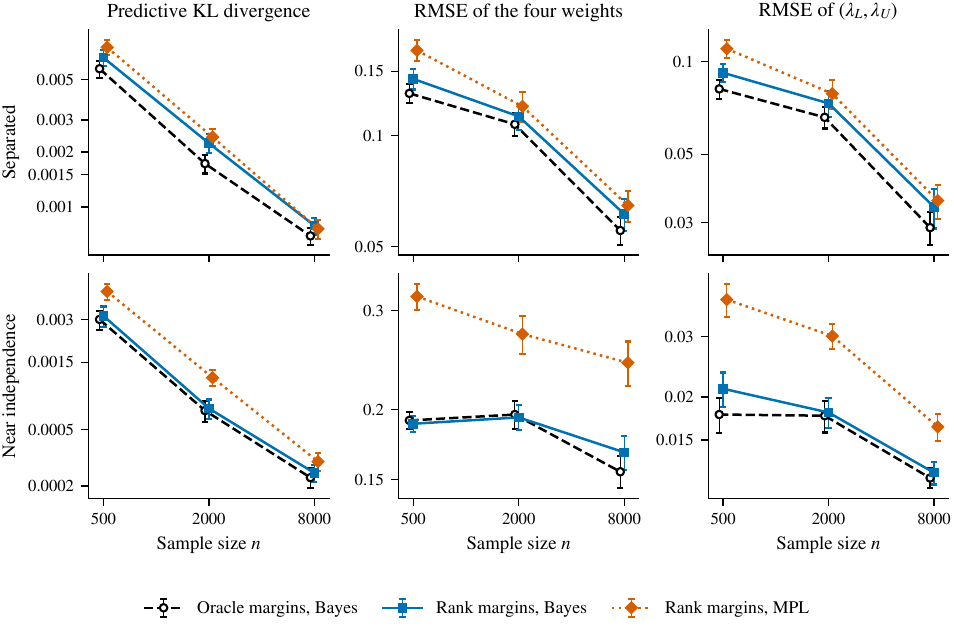}
\caption{Experiment 2, four-family mixture. Predictive Kullback--Leibler divergence (left), root mean squared error of the four weights (middle) and of the tail coefficients $(\lambda_L,\lambda_U)$ (right), averaged over 30 replicates, in the separated (top) and near-independence (bottom) scenarios; bars show one standard error.}
\label{fig:sim2}
\end{figure}

Experiment 2 (Figure~\ref{fig:sim2} and Table~\ref{tab:sim2}) separates what is learned about the density from what is learned about the components. In both scenarios the predictive divergence falls at the rate seen in Experiment 1; at $n=8000$ the rank-based pseudo-posterior is within 15\% of the oracle. With separated components the weights are learned too. Their root mean squared error falls from 0.14 to 0.06, and the 95\% credible intervals narrow from 0.46 to 0.18. Near independence the picture changes. The density is estimated even more accurately, but the weights hardly at all: their error stays between 0.17 and 0.19, and the credible intervals are still 0.64 wide at $n=8000$. These wide intervals reflect how little information the data carry for telling four nearly independent components apart. Identifiability guarantees eventual recovery (Corollary~\ref{cor:mixing}); how soon it comes depends on how distinct the components are. The tail coefficients depend only on the density, and their errors are below 0.04 at $n=8000$ in both scenarios.

For separated components at $n=8000$, MPL matches the pseudo-posterior. Elsewhere it is less reliable. At $n=500$ it returns an estimate on the boundary of the parameter set in 40\% of the separated and 83\% of the near-independent datasets. Near independence, its best starting points disagree substantially in at least 73\% of the datasets at every sample size, and its predictive divergence is larger by half or more for $n\le2000$. Where the pseudo-posterior reports wide intervals, MPL reports an essentially arbitrary point. The posterior has its price in computation, the sampler being four to nine times slower than MPL. An equal-weight version of the experiment gives the same picture (Supplementary Section~S3).

\begin{table}[t]
\centering\footnotesize
\setlength{\tabcolsep}{4pt}
\caption{Experiment 2, four-family mixture with $K=4$ and true weights $(0.4,0.3,0.2,0.1)$; 30 replicates per cell. KL: predictive Kullback--Leibler divergence ($\times10^{3}$). RMSE: root mean squared error of the four weights. Width: average width of 95\% credible intervals for the weights. Boundary: percentage of MPL fits with an estimate on the boundary of the parameter set. Unstable: percentage of MPL fits in which starts within one log-likelihood unit of the best differ by more than 0.1 in some component's Kendall's $\tau$. B, rank-based pseudo-posterior; M, rank-based MPL.}
\label{tab:sim2}
\begin{tabular}{@{}llccccccc@{}}
\toprule
& & \multicolumn{2}{c}{KL} & \multicolumn{2}{c}{RMSE} & Width & Boundary & Unstable\\
\cmidrule(lr){3-4}\cmidrule(lr){5-6}
Scenario & $n$ & B & M & B & M & B & M (\%) & M (\%)\\
\midrule
\multirow{3}{*}{Separated} & 500 & 6.59 & 7.51 & 0.143 & 0.171 & 0.46 & 40 & 73\\
 & 2000 & 2.24 & 2.42 & 0.113 & 0.120 & 0.32 & 13 & 17\\
 & 8000 & 0.79 & 0.76 & 0.061 & 0.065 & 0.18 & 3 & 10\\
\midrule
\multirow{3}{*}{Near independence} & 500 & 3.18 & 4.74 & 0.189 & 0.318 & 0.79 & 83 & 80\\
 & 2000 & 0.71 & 1.16 & 0.194 & 0.272 & 0.75 & 63 & 83\\
 & 8000 & 0.25 & 0.30 & 0.168 & 0.242 & 0.64 & 43 & 73\\
\bottomrule
\end{tabular}
\end{table}

\subsection{Experiment 3: sampler efficiency}\label{sec:exp3}

Table~\ref{tab:sampler} compares the marginal sampler (Algorithm~\ref{alg:sampler}) with data augmentation. Per iteration, the marginal sampler takes 1.7 to 2.3 times as long. In return it yields two to four times the effective sample size for the weights near independence, and two to three times that for the tail coefficients in both scenarios. Per second of computation, the two samplers are about equally efficient for separated components; near independence, where mixing is hardest, the marginal sampler is 1.3 to 1.9 times as efficient. Both target the same posterior, and their posterior means of the tail coefficients differ by at most 0.026 in every replicate. Under either sampler the weights mix slowly near independence: from 1000 retained draws at $n=8000$, the median effective sample size is 18 for data augmentation and 65 for the marginal sampler. This reflects the flat posterior seen in Experiment~2, not a defect of either sampler, and the applications therefore use several chains and longer runs.

\begin{table}[t]
\centering\footnotesize
\setlength{\tabcolsep}{4pt}
\caption{Experiment 3: data-augmentation (DA) and marginal (MG) samplers for the four-family pseudo-posterior of Experiment 2, 10 replicates per cell, 8000 iterations each. Time: mean wall-clock seconds. ESS: median over replicates of the smallest effective sample size among the four weights, or between the two tail coefficients. ESS/s ratio: median over replicates of the MG-to-DA ratio of ESS per second.}
\label{tab:sampler}
\begin{tabular}{@{}llcccccccc@{}}
\toprule
& & \multicolumn{2}{c}{Time (s)} & \multicolumn{2}{c}{ESS, weights} & \multicolumn{2}{c}{ESS, $(\lambda_L,\lambda_U)$} & \multicolumn{2}{c}{ESS/s ratio}\\
\cmidrule(lr){3-4}\cmidrule(lr){5-6}\cmidrule(lr){7-8}\cmidrule(l){9-10}
Scenario & $n$ & DA & MG & DA & MG & DA & MG & Weights & Tails\\
\midrule
\multirow{3}{*}{Separated} & 500 & 12 & 27 & 74 & 146 & 87 & 205 & 0.9 & 1.0\\
 & 2000 & 28 & 52 & 61 & 92 & 64 & 136 & 0.8 & 1.1\\
 & 8000 & 86 & 142 & 52 & 83 & 76 & 192 & 1.0 & 1.9\\
\midrule
\multirow{3}{*}{Near independence} & 500 & 11 & 22 & 68 & 179 & 177 & 449 & 1.4 & 1.3\\
 & 2000 & 27 & 50 & 27 & 99 & 114 & 282 & 1.9 & 1.4\\
 & 8000 & 84 & 142 & 18 & 65 & 66 & 181 & 1.7 & 1.4\\
\bottomrule
\end{tabular}
\end{table}

\section{Applications to physical fitness data}\label{sec:applications}

\subsection{Data and standardization}\label{sec:data}

The first dataset contains the national physical fitness test records of 8772 students (6368 women, 2404 men) in their first three years at four universities in Jiangxi Province, China \citep{Sun2024,SunData2024}. We use three tests: the 50-m sprint, with times reversed so that larger values mean faster running; the standing long jump, a standard field measure of lower-limb explosive power \citep{CastroPinero2010}; and the sit-and-reach test of flexibility. The national standard scores men and women on separate scales, and the margins are accordingly standardized within sex.

The second dataset combines the 2011--2012 and 2013--2014 cycles of NHANES \citep{NHANES}. Muscular strength is measured by combined grip strength, the sum of the largest readings of the two hands on a handheld dynamometer. Daily activity comes from a wrist-worn accelerometer worn for seven days and is summarized in monitor-independent movement summary (MIMS) units \citep{John2019,Belcher2021}. We keep adults aged 20 to 80 years (age is top-coded at 80) who completed the grip test and had at least four valid days, a day being valid if at least 95\% of its 1440 minutes were classified as wear. Activity is the mean daily MIMS total over valid days. This leaves 5336 adults (3081 women, 2255 men; Supplementary Section~S4). Grip strength declines with age and differs between the sexes \citep{Dodds2014}, and low strength is judged against sex-specific norms \citep{CruzJentoft2019}. We therefore standardize the margins within six strata formed by sex and age group (20--39, 40--59 and 60--80 years), so that ``weak'' or ``inactive'' always means weak or inactive relative to adults of the same sex and age group. Table~\ref{tab:data} gives marginal summaries.

In both datasets sex is the binary variable recorded by the data provider (in NHANES, sex as reported at the interview); gender identity was not recorded.

\begin{table}[t]
\centering\small
\setlength{\tabcolsep}{4pt}
\caption{Marginal summaries by sex: median (lower quartile, upper quartile). MIMS, monitor-independent movement summary units from the wrist accelerometer, averaged over valid days.}
\label{tab:data}
\begin{tabular}{@{}llcc@{}}
\toprule
Data & Measurement & Women & Men\\
\midrule
\multirow{4}{*}{\shortstack[l]{Student fitness\\tests (Jiangxi)}} & Students & 6368 & 2404\\
& 50-m sprint time (s) & 9.1 (8.7, 9.6) & 7.4 (7.1, 7.8)\\
& Standing long jump (cm) & 170 (161, 181) & 228 (216, 240)\\
& Sit-and-reach (cm) & 20 (16, 24) & 15 (11, 19)\\
\midrule
\multirow{4}{*}{\shortstack[l]{Adults\\(NHANES)}} & Participants & 3081 & 2255\\
& Age (years) & 49 (35, 63) & 47 (33, 62)\\
& Combined grip strength (kg) & 55.6 (47.8, 63.2) & 88.0 (75.3, 99.0)\\
& Daily activity ($10^3$ MIMS) & 14.7 (12.5, 16.8) & 14.5 (12.1, 17.1)\\
\bottomrule
\end{tabular}
\end{table}

Three features of the data limit a literal reading of the theory. First, sprint times are recorded to 0.1 seconds, jumps to the centimeter and grip strength to 0.1 kg, so ties occur. We use average ranks and repeat every reported fit with ties broken at random. Second, both samples come from multistage cluster designs. The university of each student is not recorded, and we do not use the NHANES survey weights, so the uncertainty statements describe the sampled participants rather than a target population. Third, both studies are cross-sectional; the associations below are not effects of training or of aging within a person. All four-family fits use the priors of Section~\ref{sec:priors} and two to four chains of 10000 to 12000 iterations of Algorithm~\ref{alg:sampler}. Every reported fit has rank-normalized split $\widehat R$ at most 1.02 \citep{Vehtari2021}, and every reported tail coefficient an effective sample size above 400 (Supplementary Section~S5).

\subsection{Students: speed, explosive power and flexibility}

\begin{figure}[t]
\centering
\includegraphics[width=\textwidth]{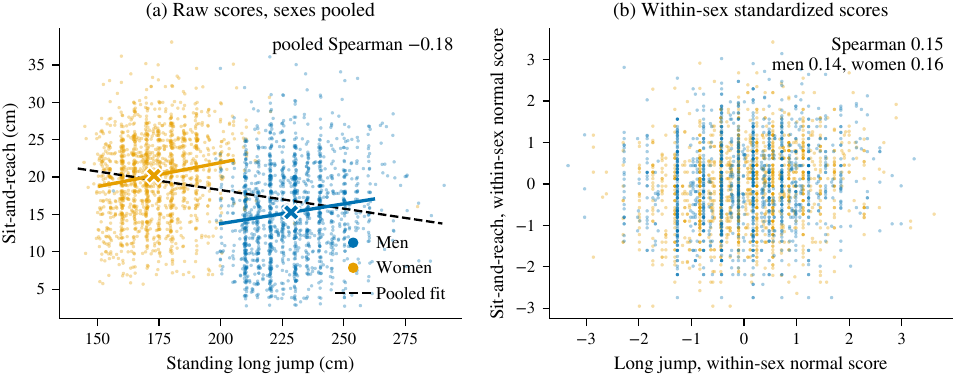}
\caption{Standing long jump and sit-and-reach in 8772 students. (a) Raw scores with ordinary least-squares lines for each sex (solid) and for the pooled sample (dashed); 1500 randomly chosen points per sex, slightly jittered. (b) Normal scores of within-sex ranks.}
\label{fig:reversal}
\end{figure}

Men jump farther and women reach farther (Table~\ref{tab:data}). Pooling the sexes therefore produces a negative association between jump and sit-and-reach (Spearman correlation $-0.18$), which would suggest a trade-off between explosive power and flexibility (Figure~\ref{fig:reversal}a). Within each sex the association is positive: 0.14 in men, 0.16 in women, and 0.15 for the pooled within-sex standardized scores (Figure~\ref{fig:reversal}b). The apparent trade-off is a composition effect of the kind described by \citet{Simpson1951}. Standardization also reduces the speed--power association from 0.76, inflated by the sex difference in both tests, to 0.44.

We first fitted three-dimensional models to the standardized speed, power and flexibility scores, with 3000 training and 5772 evaluation records split within sex and grade. Two- and three-component Gaussian mixtures improve slightly on a single Gaussian copula. A Student $t_5$ copula, which imposes tail dependence on every pair, is clearly worse (Table~\ref{tab:heldout}). In the two-component fit, the speed--flexibility and power--flexibility correlations stay below 0.2 in both components, whereas the speed--power correlation is 0.27 in one and 0.63 in the other (Supplementary Section~S5). Flexibility is thus an almost separate dimension of fitness; the heterogeneity lies in how strongly speed and power are coupled.

\begin{table}[t]
\centering\small
\caption{Mean held-out log copula density per subject, with the difference from a single Gaussian copula and its paired standard error. Students: 3000 training and 5772 evaluation records. Adults: 2499 training and 2837 evaluation records.}
\label{tab:heldout}
\begin{tabular}{@{}llcc@{}}
\toprule
Data and variables & Model & Mean log density & Difference (SE)\\
\midrule
\multirow{4}{*}{\shortstack[l]{Students: speed,\\power, flexibility}} & Gaussian, $K=1$ & 0.1162 & ---\\
 & Gaussian, $K=2$ & 0.1173 & 0.0011 (0.0003)\\
 & Gaussian, $K=3$ & 0.1174 & 0.0013 (0.0004)\\
 & Student $t_5$, $K=1$ & 0.0896 & $-$0.0266 (0.0037)\\
\midrule
\multirow{3}{*}{\shortstack[l]{Students: speed,\\power}} & Gaussian, $K=1$ & 0.1029 & ---\\
 & Gaussian, $K=2$ & 0.1048 & 0.0019 (0.0005)\\
 & Four-family, $K=4$ & 0.1079 & 0.0050 (0.0011)\\
\midrule
\multirow{3}{*}{\shortstack[l]{Adults: grip,\\activity}} & Gaussian, $K=1$ & 0.0015 & ---\\
 & Gaussian, $K=2$ & 0.0025 & 0.0010 (0.0012)\\
 & Four-family, $K=4$ & 0.0086 & 0.0071 (0.0019)\\
\bottomrule
\end{tabular}
\end{table}

A Gaussian mixture is radially symmetric. It cannot say whether the strong coupling sits at the top or at the bottom of the distribution, but the four-family mixture can. Fitted to the standardized speed and power scores, it improves on a single Gaussian copula by $0.0050$ per student on the evaluation records (standard error $0.0011$), more than twice the gain of a two-component Gaussian mixture (Table~\ref{tab:heldout}). In the full sample the Clayton component receives almost no weight, 0.01 (0.00 to 0.05), consistent with an absent Clayton component (Corollary~\ref{cor:fourfamily}(a)); its Kendall's $\tau$ is barely constrained by the data and is not interpreted. The upper tail-dependence coefficient is $\lambda_U=0.10$ (95\% interval 0.06 to 0.15), about ten times the lower one (Table~\ref{tab:tail}), and every posterior draw has $\lambda_U>\lambda_L$. Under the model these coefficients are consistently estimated whether or not the weights are (Proposition~\ref{prop:functionals}).

\begin{table}[!htbp]
\centering\footnotesize
\setlength{\tabcolsep}{3pt}
\caption{Four-family mixture fits: posterior means and 95\% credible intervals. Top: component weights and component Kendall's $\tau$ in the full samples (average ranks); the $\tau$ of a component with posterior weight near zero is barely constrained by the data and is not interpreted. Bottom: lower and upper tail-dependence coefficients and the posterior probability of the dominant tail ordering, by subsample. Ranks are computed within sex (students) or within sex and age group (adults).}
\label{tab:tail}
\begin{tabular}{@{}lcccc@{}}
\toprule
& \multicolumn{2}{c}{Students} & \multicolumn{2}{c}{Adults}\\
\cmidrule(lr){2-3}\cmidrule(l){4-5}
Component & Weight & Kendall's $\tau$ & Weight & Kendall's $\tau$\\
\midrule
Clayton & 0.01 (0.00, 0.05) & 0.59 (0.12, 0.85) & 0.63 (0.44, 0.82) & 0.14 (0.09, 0.19)\\
Gumbel & 0.35 (0.21, 0.48) & 0.23 (0.17, 0.32) & 0.01 (0.00, 0.08) & 0.58 (0.01, 0.85)\\
Frank & 0.50 (0.33, 0.66) & 0.35 (0.29, 0.43) & 0.02 (0.00, 0.16) & 0.52 (0.01, 0.87)\\
Gaussian & 0.14 (0.00, 0.35) & 0.29 ($-$0.10, 0.50) & 0.33 (0.15, 0.53) & $-$0.16 ($-$0.28, $-$0.08)\\
\bottomrule
\end{tabular}

\vspace{0.8em}
\begin{tabular}{@{}lcccc@{}}
\toprule
\multicolumn{5}{@{}l}{\emph{Students: sprint speed and standing long jump}}\\
& $n$ & $\lambda_L$ & $\lambda_U$ & $\Pr(\lambda_U>\lambda_L)$\\
\cmidrule(l){2-5}
All, average ranks & 8772 & 0.01 (0.00, 0.03) & 0.10 (0.06, 0.15) & 1.00\\
All, random ties (1) & 8772 & 0.01 (0.00, 0.03) & 0.10 (0.06, 0.15) & 1.00\\
All, random ties (2) & 8772 & 0.01 (0.00, 0.03) & 0.11 (0.06, 0.15) & 1.00\\
Women & 6368 & 0.01 (0.00, 0.03) & 0.12 (0.06, 0.17) & 1.00\\
Men & 2404 & 0.02 (0.00, 0.05) & 0.05 (0.00, 0.12) & 0.86\\
First year & 2809 & 0.01 (0.00, 0.04) & 0.14 (0.07, 0.21) & 1.00\\
Second year & 2904 & 0.02 (0.00, 0.08) & 0.03 (0.00, 0.09) & 0.52\\
Third year & 3059 & 0.03 (0.00, 0.06) & 0.13 (0.06, 0.21) & 0.99\\
\midrule
\multicolumn{5}{@{}l}{\emph{Adults: grip strength and daily activity}}\\
& $n$ & $\lambda_L$ & $\lambda_U$ & $\Pr(\lambda_L>\lambda_U)$\\
\cmidrule(l){2-5}
All, average ranks & 5336 & 0.07 (0.03, 0.12) & 0.00 (0.00, 0.01) & 1.00\\
All, random ties (1) & 5336 & 0.07 (0.03, 0.11) & 0.00 (0.00, 0.01) & 1.00\\
All, random ties (2) & 5336 & 0.07 (0.03, 0.12) & 0.00 (0.00, 0.01) & 1.00\\
Age 20--39 & 1804 & 0.02 (0.00, 0.07) & 0.01 (0.00, 0.03) & 0.63\\
Age 40--59 & 1909 & 0.06 (0.01, 0.11) & 0.01 (0.00, 0.03) & 0.95\\
Age 60--80 & 1623 & 0.16 (0.08, 0.23) & 0.01 (0.00, 0.03) & 1.00\\
\bottomrule
\end{tabular}
\end{table}

\begin{figure}[t]
\centering
\includegraphics[width=\textwidth]{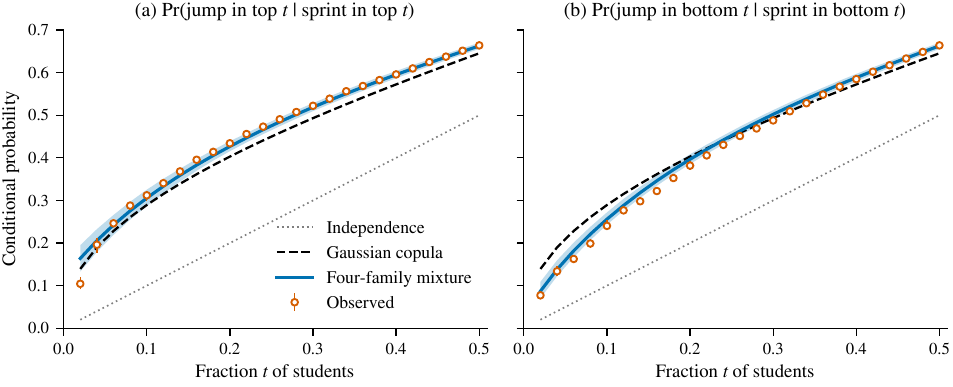}
\caption{Co-occurrence of extreme sprint and jump performance among 8772 students, after within-sex standardization. Observed proportions (circles; bars show the range over 20 random tie-breakings) and posterior means for the four-family mixture (solid line, with 95\% band) and a single Gaussian copula (dashed). The dotted line is the value $t$ under independence.}
\label{fig:tails}
\end{figure}

In terms of proportions of students (Figure~\ref{fig:tails}), the result reads as follows. Among the fastest 10\% of sprinters within their sex, 31\% are also in the top 10\% for the standing long jump, three times the rate expected under independence. Among the slowest 10\%, only 24\% are in the bottom 10\% for the jump. The four-family mixture reproduces both curves. A Gaussian copula, being symmetric, overstates co-weakness and understates co-excellence. Breaking ties at random leaves the asymmetry unchanged. It is clear in women ($\lambda_U=0.12$, $\Pr(\lambda_U>\lambda_L)=1.00$) and weaker in the smaller sample of men (0.05 and 0.86). We also fitted the model separately by year of study, with margins standardized within sex and year. The asymmetry appears in first-year and third-year students but not in second-year students, for whom $\lambda_U=0.03$, $\lambda_L=0.02$ and $\Pr(\lambda_U>\lambda_L)=0.52$. The empirical proportions agree. At the 10\% level, co-excellence and co-weakness are 0.29 and 0.21 in the first year and 0.35 and 0.23 in the third, against 0.30 and 0.29 in the second year, where co-weakness is more common. The data do not identify the reason, since the cohorts differ in more than their year of study.

In practice, the two tests overlap more at the top of the distribution than at the bottom. A student in the top 10\% for one explosive test has about a 31\% chance of also being in the top 10\% for the other, three times the rate under independence, consistent with a shared neuromuscular basis for sprinting and jumping. Poor results are more often specific to one test. At the bottom of the distribution one test is therefore a weaker proxy for the other than at the top, and a student who needs support in one may be missed by screening with the other alone. Flexibility should be assessed in its own right; within sex, no trade-off with power need be expected.

\subsection{Adults: grip strength and daily activity}

Judged by a correlation coefficient, grip strength and daily activity are almost unrelated. The Spearman correlation of the standardized scores is $0.05$ overall, and $0.03$, $-0.05$ and $0.21$ in the three age groups. The copula fits point elsewhere. On the evaluation records (2499 training and 2837 evaluation participants, split within strata), the four-family mixture improves the mean log copula density over a single Gaussian copula by $0.0071$ per participant (standard error $0.0019$). A two-component Gaussian mixture gains only $0.0010$ ($0.0012$; Table~\ref{tab:heldout}). Extra flexibility without asymmetry thus buys almost nothing, which points to asymmetry, not heterogeneity as such, as the source of the improvement.

\begin{figure}[t]
\centering
\includegraphics[width=\textwidth]{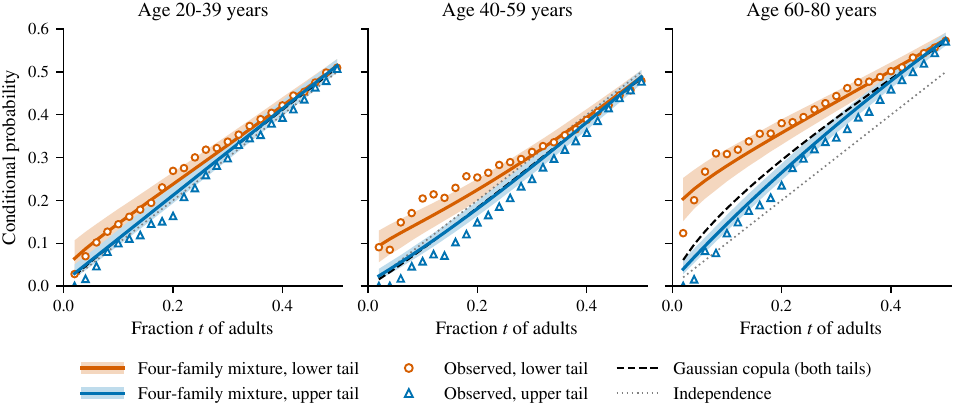}
\caption{Co-occurrence of low and of high grip strength and daily activity among 5336 adults, by age group, after standardization within sex and age group. Lower tail: proportion of adults in the bottom fraction $t$ of daily activity among those in the bottom fraction $t$ of grip strength, $\chi_L(t)$ in~\eqref{eq:chi}; upper tail: the same for the top fraction, $\chi_U(t)$. Observed proportions (points) and posterior means of the four-family mixture fitted to each age group (solid lines, with 95\% bands); the Gaussian copula fitted to the same data gives the same value for both tails (dashed); the dotted line is the value $t$ under independence.}
\label{fig:nhanes}
\end{figure}

The full-sample fit puts weight $0.63$ on a Clayton component with $\tau=0.14$ and $0.33$ on a Gaussian component with \emph{negative} $\tau=-0.16$ (Table~\ref{tab:tail}). The Gumbel and Frank components receive weights of 0.01 and 0.02, and their wide $\tau$ intervals carry no information. Weakness and inactivity thus go together, while elsewhere in the distribution the association is slightly negative; in a correlation coefficient the two patterns nearly cancel. The lower tail-dependence coefficient is $\lambda_L=0.07$ (95\% interval 0.03 to 0.12), the upper one essentially zero, and $\Pr(\lambda_L>\lambda_U)=0.999$. Random tie-breaking changes nothing. The asymmetry grows with age. Fitted within age groups, $\lambda_L$ rises from $0.02$ at 20--39 years to $0.06$ at 40--59 years and $0.16$ (0.08 to 0.23) at 60--80 years, with posterior probabilities of $\lambda_L>\lambda_U$ of $0.63$, $0.95$ and $1.00$.

At finite levels the pattern is easy to read (Figure~\ref{fig:nhanes}). Consider adults aged 60 to 80 years whose grip strength is in the weakest 10\% for their sex and age group. Of these, 31\% are also among the least active 10\%, three times the rate under independence and well above the 18\% implied by a Gaussian copula fitted to the same data. Among the strongest 10\%, only 12\% are among the most active 10\%, close to independence. In adults under 40 the two tails differ much less (0.14 and 0.10). The four-family mixture tracks the observed proportions closely in the oldest group and reproduces the ordering of the tails in the others. The Gaussian copula averages the two tails and misrepresents both.

Weakness and low physical activity are two of the five criteria of the frailty phenotype \citep{Fried2001}, and grip strength is a cheap, prognostic measurement \citep{Leong2015}. Our results add a distributional statement. In older adults a weak grip marks a group in which very low daily activity is concentrated, whereas a strong grip says little about high activity. A correlation of $0.05$, or even $0.21$ in the oldest group, would suggest that grip strength says little about activity. The tail analysis shows that it carries information precisely in the lower tail, where screening takes place. The data are cross-sectional and unweighted, and chronic disease or disability that limits both strength and movement may produce the co-occurrence. The result therefore describes in whom weakness and inactivity co-occur, not why.

\subsection{Held-out validation of tail probabilities}

\begin{figure}[t]
\centering
\includegraphics[width=\textwidth]{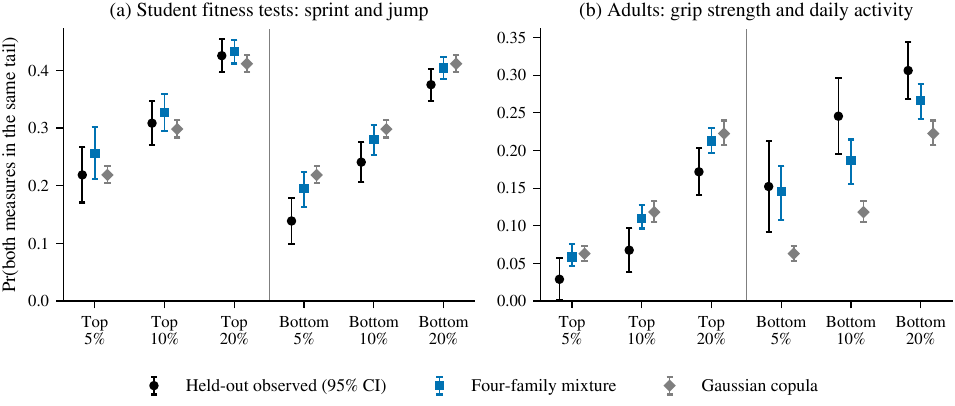}
\caption{Held-out validation of tail co-occurrence. For the evaluation records, the observed proportion of records in the top (left of the vertical line) or bottom (right) fraction $t$ of the second measure among those in the same fraction of the first (circles, with 95\% intervals), against posterior predictions from the training fits of the four-family mixture (squares) and a Gaussian copula (diamonds), with 95\% credible intervals. (a) Students, sprint speed and standing long jump, 5772 evaluation records. (b) Adults, grip strength and daily activity, 2837 evaluation records.}
\label{fig:validation}
\end{figure}

Figure~\ref{fig:validation} checks the tail predictions of the training fits against the evaluation records. For the adults, the four-family mixture predicts that 14.6\% of the weakest 5\% are also among the least active 5\%; the observed proportion is 15.2\%, and the Gaussian copula predicts 6.3\%. At the 10\% and 20\% levels the four-family prediction falls between the observation and the Gaussian prediction, and the Gaussian prediction for the lower tail lies outside the 95\% interval of the observed proportion at every level. In the upper tail both models overstate co-occurrence, the four-family mixture slightly less. For the students, both models predict co-excellence well. For co-weakness, however, the Gaussian copula predicts the same value as for co-excellence, $0.22$ at the 5\% level, where $0.14$ is observed, while the four-family mixture predicts $0.20$. In both datasets the four-family mixture orders the tails correctly, which the Gaussian copula cannot do.

\section{Discussion}\label{sec:discussion}

Replacing the margins by ranks or kernel estimates in a Bayesian copula likelihood gives a consistent estimate of the copula density, provided the log densities obey a boundary envelope. Gaussian, Student, Clayton, Gumbel and Frank copulas and their finite mixtures all do. Density consistency, together with continuity in the parameter, delivers tail coefficients, conditional tail probabilities and rank correlations. Identifiability, established by Theorem~\ref{thm:identification} for the Clayton--Gumbel--Frank--Gaussian dictionary, matters only for statements about the components. The simulations show why the distinction is worth making. Near independence the density and the tail coefficients are estimated accurately, while the weights are barely learned even at $n=8000$. Maximum pseudo-likelihood then often returns degenerate or unstable estimates, whereas the wide credible intervals of the pseudo-posterior make the lack of information visible. In that regime the marginal sampler is the more efficient choice.

In the applications, the shape of dependence matters more than its strength. Sprinting and jumping are linked mainly among the best students. Grip strength and daily activity are linked almost only among the weakest and least active adults, and most strongly in old age. In both cases a symmetric model with the right overall correlation gives the wrong answer in at least one tail, and held-out data confirm the asymmetry.

Several limitations point to further work. The theory assumes continuous margins, whereas fitness tests are recorded on coarse grids. A treatment of ties in the spirit of \citet{Genest2007} would be valuable, although our conclusions did not change under random tie-breaking. Survey weights could enter through a weighted pseudo-log-likelihood, but the theory would then have to account for the design. The dimension and the number of components are fixed, and the parameter sets compact. Two questions remain open for this dictionary: rates of contraction, which require stronger identifiability conditions \citep{Heinrich2018} and would quantify how fast absent families lose their weight, and the number of components within a family \citep{Rousseau2011}. Theorem~\ref{thm:identification} does not cover Student components with unknown, component-specific degrees of freedom. Finally, the pseudo-posterior ignores the uncertainty of the estimated margins, so its credible intervals need not have nominal frequentist coverage \citep{Miller2021,Smith2025}. Calibration in the spirit of cut posteriors is a natural next step.

\section*{Data and code availability}
The NHANES files are in the public domain and available from the National Center for Health Statistics \citep{NHANES}. The Jiangxi fitness data are the supporting dataset of \citet{Sun2024}, available under a CC BY 4.0 license \citep{SunData2024}. The wearable-device data analysed in the Supplementary Material are available under a CC0 license \citep{BroaddusData2018}. Python code that prepares all datasets and reproduces all simulations, analyses, tables and figures is provided as supplementary material.

\section*{CRediT authorship contribution statement}
\textbf{Yujian Liu:} Conceptualization, Methodology, Formal analysis, Software, Data curation, Visualization, Writing -- original draft, Writing -- review \& editing.

\section*{Declaration of competing interest}
The author declares no competing interests.

\section*{Funding}
This work was supported by a research project of the School of Economics and Management, Shanghai University of Sport.

\section*{Declaration of generative AI and AI-assisted technologies in the manuscript preparation process}
During the preparation of this work the author used generative AI tools to assist with language editing and proof checking. The author reviewed and edited the content as needed and takes full responsibility for the content of the published article.

\section*{Acknowledgements}
The author thanks the National Center for Health Statistics and the creators of the Jiangxi fitness dataset for making their data openly available.


\clearpage
\setcounter{section}{0}\setcounter{table}{0}\setcounter{figure}{0}\setcounter{equation}{0}
\renewcommand{\thesection}{S\arabic{section}}\renewcommand{\theHsection}{S\arabic{section}}
\renewcommand{\thetable}{S\arabic{table}}\renewcommand{\theHtable}{S\arabic{table}}
\renewcommand{\thefigure}{S\arabic{figure}}\renewcommand{\theHfigure}{S\arabic{figure}}
\renewcommand{\theequation}{S\arabic{equation}}\renewcommand{\theHequation}{S\arabic{equation}}
\begin{center}
{\Large\bfseries Supplementary Material\par}
\end{center}
\vspace{0.5em}
Section~\ref{sec:s-proofs} contains the proofs of the results in Section~3 of the main article; Section~\ref{sec:s-computation} describes the priors, samplers, competitor and accuracy measures; Section~\ref{sec:s-simulation} gives additional simulation results; Section~\ref{sec:s-data} describes the data and their preparation; Section~\ref{sec:s-applications} gives additional application results and convergence diagnostics; and Section~\ref{sec:s-running} reports an application to wearable-device data. Equation, theorem, algorithm and section numbers without the prefix S refer to the main article.

\section{Proofs}\label{sec:s-proofs}

The notation is that of Sections~2 and~3 of the main article.
Throughout, $\ell_\eta=\log q_\eta$, $b(t)=-\log\{t(1-t)\}$ and $E(u)=A+B\sum_jb(u_j)$. Almost-sure statements refer to the sampling distribution of $X_1,X_2,\ldots$; since the margins are continuous, ties among the $X_{ij}$ have probability zero. Suprema over $\eta$ are measurable because $H$ is a compact metric space, hence separable, and all functions of $\eta$ involved are continuous.

\subsection{Marginal estimates}\label{app:margins}

\begin{lemma}\label{lem:margins}
For scaled ranks, and for the kernel estimates~\eqref{eq:kernel} with $h_{nj}\to0$ almost surely, almost surely $\hU_i\in(0,1)^d$ for all $i$ and
\begin{equation}\label{eq:transfer-input}
\max_{1\le i\le n}\|\hU_i-U_i\|_\infty\to0,\qquad \limsup_{n\to\infty}\frac1n\sum_{i=1}^nE(\hU_i)^2<\infty .
\end{equation}
\end{lemma}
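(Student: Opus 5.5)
The argument treats each coordinate $j$ separately and then combines them, because $E(u)^2\le (d+1)\{A^2+B^2\sum_j b(u_j)^2\}$. It therefore suffices to show, for each $j$, three things: that $\hU_{ij}\in(0,1)$, that $\max_i|\hU_{ij}-U_{ij}|\to0$, and that $\limsup_n n^{-1}\sum_i b(\hU_{ij})^2<\infty$ almost surely.

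\emph{Scaled ranks.} Ties occur with probability zero, so $\{R_{1j},\ldots,R_{nj}\}=\{1,\ldots,n\}$. Then $\hU_{ij}=R_{ij}/(n+1)\in(0,1)$, and
\[
\frac1n\sum_i b(\hU_{ij})^2=\frac1n\sum_{r=1}^n b\Bigl(\frac r{n+1}\Bigr)^2 .
\]
This is a Riemann sum for $\int_0^1 b(t)^2\,dt<\infty$. Since $b^2$ is convex and symmetric about $1/2$, it converges to that integral, or is at least bounded by about $2\int_0^1b^2$, through a direct comparison of each term with the integral over the neighbouring cell. The uniform closeness follows from $U_{ij}=F_j(X_{ij})$ and $R_{ij}/n=\mathbb F_{nj}(X_{ij})$, where $\mathbb F_{nj}$ is the empirical distribution function. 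Glivenko--Cantelli gives $\sup_x|\mathbb F_{nj}-F_j|\to0$ almost surely, and the extra factor $n/(n+1)$ costs at most $1/n$.

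\emph{Kernel estimates.} The first step is to show $\hU_{ij}\in(0,1)$. The summand with $l=i$ contributes $\mathcal K_j(0)/n$, and $\mathcal K_j(0)\in(0,1)$, so $\hU_{ij}\ge \mathcal K_j(0)/n>0$. Likewise $1-\hU_{ij}\ge\{1-\mathcal K_j(0)\}/n>0$.

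For uniform closeness, I use that $\widehat F_j$ is a distribution function (the empirical measure convolved with the scaled kernel) that converges weakly to $F_j$ almost surely when $h_{nj}\to0$. Because $F_j$ is continuous, weak convergence upgrades to uniform convergence by Pólya's theorem. Evaluating at $X_{ij}$ then gives $\max_i|\hU_{ij}-U_{ij}|\le\sup_x|\widehat F_j(x)-F_j(x)|\to0$.

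The moment bound is the main obstacle, and the plan is to reduce it to the rank case by monotonicity. Sort the observations so that $X_{(1)j}<\cdots<X_{(n)j}$. For $x=X_{(r)j}$, every $l$ with $X_{lj}\ge x$ has $\mathcal K_j((x-X_{lj})/h)\le\mathcal K_j(0)$, and every $l$ with $X_{lj}\le x$ has $\mathcal K_j(\cdot)\ge\mathcal K_j(0)$. This gives
\[
\hU_{(r)j}\ge \mathcal K_j(0)\,\frac rn,\qquad 1-\hU_{(r)j}\ge\{1-\mathcal K_j(0)\}\,\frac{n-r+1}n .
\]
With $\kappa_0=\min\{\mathcal K_j(0),1-\mathcal K_j(0)\}$, this yields
\[
b(\hU_{(r)j})\le -\log\{\kappa_0 r/n\}-\log\{\kappa_0(n-r+1)/n\}\le 2\log(1/\kappa_0)+2\log 2+b\bigl(r/(n+1)\bigr),
\]
using $r/n\ge r/(n+1)$ and $(n-r+1)/n\ge 1-r/(n+1)$. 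Squaring and averaging then bounds $n^{-1}\sum_ib(\hU_{ij})^2$ by a constant plus twice the rank Riemann sum, which was shown finite above. This argument is deterministic, given no ties, and holds whatever the bandwidth. The one point to check carefully is that the bound holds uniformly in $r$, including the extreme indices $r=1$ and $r=n$, and there the $\mathcal K_j(0)/n$ floor is exactly what keeps the bound finite.
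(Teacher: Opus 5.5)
Your proof is correct and follows the paper's argument essentially step by step: Glivenko--Cantelli for the ranks, a Riemann-sum bound on $n^{-1}\sum_r b\{r/(n+1)\}^2$, and the monotonicity bounds $\hU_{(r)j}\ge\mathcal K_j(0)\,r/n$ and $1-\hU_{(r)j}\ge\{1-\mathcal K_j(0)\}(n-r+1)/n$, which reduce the kernel moment bound to the rank case. The only minor deviation is in the uniform closeness of the kernel estimates, which you get from weak convergence plus P\'olya's theorem, whereas the paper writes $\widehat F_j(x)=\int F_{nj}(x-h_{nj}z)\,d\mathcal K_j(z)$ and applies dominated convergence directly; both routes are valid.
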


\begin{proof}
\emph{Rank grid.} Since $b(t)^2\le2(\log t)^2+2\{\log(1-t)\}^2$ and $t\mapsto(\log t)^2$ decreases on $(0,1)$, comparison with an integral gives $\sum_{a=1}^n\{\log(a/(n+1))\}^2\le(n+1)\int_0^1(\log t)^2dt=2(n+1)$, and the reflected sum obeys the same bound. Hence
\begin{equation}\label{eq:rank-moment}
\frac1n\sum_{a=1}^nb\!\left(\frac a{n+1}\right)^2\le\frac{8(n+1)}n\le16\qquad(n\ge1).
\end{equation}

\emph{Ranks.} Let $F_{nj}$ be the empirical distribution function of $X_{1j},\ldots,X_{nj}$, so that $R_{ij}=nF_{nj}(X_{ij})$. Then $|R_{ij}/(n+1)-F_j(X_{ij})|\le\|F_{nj}-F_j\|_\infty+1/(n+1)$, which tends to zero uniformly in $i$ by the Glivenko--Cantelli theorem. For each $j$ the ranks $R_{1j},\ldots,R_{nj}$ are a permutation of $1,\ldots,n$, so~\eqref{eq:rank-moment} and $E(u)^2\le(d+1)\{A^2+B^2\sum_jb(u_j)^2\}$ give $n^{-1}\sum_iE(\hU_i)^2\le(d+1)(A^2+16dB^2)$.

\emph{Kernel estimates.} Since $\widehat F_j(x)=\int F_{nj}(x-h_{nj}z)\,d\mathcal K_j(z)$,
\[
\|\widehat F_j-F_j\|_\infty\le\|F_{nj}-F_j\|_\infty+\int\sup_x|F_j(x-h_{nj}z)-F_j(x)|\,d\mathcal K_j(z).
\]
On the event $\{h_{nj}\to0\}$ the integrand tends to zero for every $z$, because a continuous distribution function is uniformly continuous, and it is bounded by one; by dominated convergence $\max_i|\hU_{ij}-U_{ij}|\le\|\widehat F_j-F_j\|_\infty\to0$. For the moment bound, let $X_{(1)j}<\cdots<X_{(n)j}$ be the order statistics and $\alpha_j=\mathcal K_j(0)$. As $\mathcal K_j$ is nondecreasing, each of the $a$ observations $X_{lj}\le X_{(a)j}$ contributes at least $\alpha_j/n$ to $\widehat F_j(X_{(a)j})$, and each of the $n-a+1$ observations $X_{lj}\ge X_{(a)j}$ contributes at least $(1-\alpha_j)/n$ to $1-\widehat F_j(X_{(a)j})$. Therefore
\[
\widehat F_j(X_{(a)j})\ge\alpha_j\frac an,\qquad 1-\widehat F_j(X_{(a)j})\ge(1-\alpha_j)\frac{n-a+1}n,
\]
so that $\hU_{ij}\in(0,1)$ and $b\{\widehat F_j(X_{(a)j})\}\le b\{a/(n+1)\}-\log\{\alpha_j(1-\alpha_j)\}$. Together with~\eqref{eq:rank-moment} this gives the moment bound.
\end{proof}

\subsection{Uniform transfer and proof of Theorem~\ref{thm:consistency}(a)}\label{app:transfer}

\begin{lemma}\label{lem:transfer}
Under Assumption~\ref{ass:envelope}, pseudo-observations satisfying~\eqref{eq:transfer-input} almost surely satisfy
\[
\sup_{\eta\in H}\left|\frac1n\sum_{i=1}^n\{\ell_\eta(\hU_i)-\ell_\eta(U_i)\}\right|\to0 .
\]
\end{lemma}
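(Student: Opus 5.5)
The argument splits the cube into a compact interior and a boundary layer. Uniform continuity of $\ell_\eta$ in both $(\eta,u)$ handles the interior. On the boundary layer the envelope together with the second-moment bound in~\eqref{eq:transfer-input} shows the contribution is small.

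Fix $\delta\in(0,1/4)$ and let $K_\delta=[\delta,1-\delta]^d$, $K_{\delta/2}=[\delta/2,1-\delta/2]^d$. Because $H\times K_{\delta/2}$ is compact and $(\eta,u)\mapsto\ell_\eta(u)$ is continuous there (Assumption~\ref{ass:envelope}), the family $\{\ell_\eta:\eta\in H\}$ is uniformly equicontinuous on $K_{\delta/2}$. Its modulus is $\omega_\delta(r)\to0$ as $r\to0$. By~\eqref{eq:transfer-input}, $r_n:=\max_i\|\hU_i-U_i\|_\infty\to0$ almost surely. Once $r_n<\delta/2$, every $i$ with $U_i\in K_\delta$ also has $\hU_i\in K_{\delta/2}$, so
\[
\sup_\eta\frac1n\sum_{i:\,U_i\in K_\delta}|\ell_\eta(\hU_i)-\ell_\eta(U_i)|\le\omega_\delta(r_n)\to0 .
\]
This disposes of the interior indices.

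For the remaining indices, with $U_i\notin K_\delta$, bound each term by $E(\hU_i)+E(U_i)$ using~\eqref{eq:envelope}; this removes the supremum over $\eta$. Let $N_n(\delta)=\#\{i:U_i\notin K_\delta\}$. By Cauchy--Schwarz,
\[
\frac1n\sum_{i:\,U_i\notin K_\delta}\{E(\hU_i)+E(U_i)\}\le\Bigl(\frac{N_n(\delta)}n\Bigr)^{1/2}\Bigl[\Bigl(\frac1n\sum_iE(\hU_i)^2\Bigr)^{1/2}+\Bigl(\frac1n\sum_iE(U_i)^2\Bigr)^{1/2}\Bigr].
\]
The first bracketed average has a finite almost-sure limsup by~\eqref{eq:transfer-input}. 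The second converges almost surely to $\E E(U)^2<\infty$ by the strong law, since each $U_j$ is uniform and $\int_0^1b(t)^2dt<\infty$. Also $N_n(\delta)/n\to\Pr(U\notin K_\delta)\le2d\delta$ by the strong law. The limsup of the boundary part is therefore at most $C\sqrt{2d\delta}$, where $C$ is an almost surely finite random constant that does not depend on $\delta$.

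To finish, combine the two parts and let $\delta\downarrow0$ along a countable sequence, for example $\delta=1/m$, so that all the almost-sure events intersect to an event of probability one. The main obstacle is this boundary step. The pseudo-observations are dependent and $\hU_i$ may lie closer to the boundary than $U_i$, so no law of large numbers applies to $E(\hU_i)$ directly. The deterministic Riemann-sum moment bound of Lemma~\ref{lem:margins} is what makes the argument work, because Cauchy--Schwarz converts the vanishing fraction of boundary indices into a vanishing contribution. The only subtlety to check is that $C$ is uniform in $\delta$, which holds because neither moment average depends on $\delta$.
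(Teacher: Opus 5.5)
Your proof is correct and follows the paper's argument. The paper also splits off a compact interior handled by uniform continuity on $H\times[\delta',1-\delta']^d$, bounds the boundary layer by the envelope and Cauchy--Schwarz with $\delta$-free second moments, and lets $\delta\downarrow0$ along a sequence. The only difference is cosmetic: you classify indices by $U_i$ alone and take the modulus on the larger cube $[\delta/2,1-\delta/2]^d$, so the strong law applies directly to the boundary count, whereas the paper requires both $U_i$ and $\hU_i$ to lie in $[\delta,1-\delta]^d$ and then shows that the remaining indices eventually have $U_i\notin[2\delta,1-2\delta]^d$.
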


\begin{proof}
Fix $\delta\in(0,1/4)$ and let $G_\delta=[\delta,1-\delta]^d$. By Assumption~\ref{ass:envelope}, $(\eta,u)\mapsto\ell_\eta(u)$ is uniformly continuous on the compact set $H\times G_\delta$; let $\omega_\delta$ be its modulus of continuity in $u$, uniform in $\eta$. Split the indices into $I_n=\{i:U_i\in G_\delta,\ \hU_i\in G_\delta\}$ and its complement $B_n$. Then
\[
\frac1n\sum_{i\in I_n}\sup_\eta|\ell_\eta(\hU_i)-\ell_\eta(U_i)|\le\omega_\delta\Bigl(\max_i\|\hU_i-U_i\|_\infty\Bigr)\to0 .
\]
Once $\max_i\|\hU_i-U_i\|_\infty<\delta$, every $i\in B_n$ has $U_i\notin G_{2\delta}$, so by the strong law and the uniformity of the margins of $U$, $\limsup_n|B_n|/n\le\Pr(U\notin G_{2\delta})\le4d\delta$. By~\eqref{eq:envelope} and the Cauchy--Schwarz inequality,
\[
\frac1n\sum_{i\in B_n}\sup_\eta|\ell_\eta(\hU_i)-\ell_\eta(U_i)|\le\left(\frac{|B_n|}n\right)^{1/2}\left[\frac1n\sum_{i=1}^n\{E(\hU_i)+E(U_i)\}^2\right]^{1/2}.
\]
The second factor has an almost surely finite limit superior $\Gamma$, not depending on $\delta$, by~\eqref{eq:transfer-input} and the strong law applied to $E(U_i)^2$, whose mean is finite because each $U_j$ is uniform. The limit superior of the left side of the lemma is therefore at most $\Gamma(4d\delta)^{1/2}$; letting $\delta\downarrow0$ along a sequence completes the proof.
\end{proof}

\begin{proof}[Proof of Theorem~\ref{thm:consistency}(a)]
By Lemmas~\ref{lem:margins} and~\ref{lem:transfer} it suffices to show that $M_n(\eta)=n^{-1}\sum_i\ell_\eta(U_i)$ satisfies $\sup_\eta|M_n(\eta)-M(\eta)|\to0$ almost surely, a uniform law of the type in \citet{Wald1949}. For $\eta\in H$ and $r>0$ let $\omega_{\eta,r}(u)=\sup\{|\ell_\zeta(u)-\ell_\eta(u)|:\operatorname{dist}(\zeta,\eta)<r\}$. By continuity $\omega_{\eta,r}\downarrow0$ pointwise as $r\downarrow0$, and $\omega_{\eta,r}\le2E$, which is integrable; hence $\E\,\omega_{\eta,r}(U)\to0$, and dominated convergence also shows that $M$ is finite and continuous. Given $\varepsilon>0$, choose radii $r_\eta$ with $\E\,\omega_{\eta,r_\eta}(U)<\varepsilon$ and a finite subcover of $H$ by balls $B(\eta_m,r_{\eta_m})$, $m=1,\ldots,N$. For $\zeta\in B(\eta_m,r_{\eta_m})$,
\[
|M_n(\zeta)-M(\zeta)|\le|M_n(\eta_m)-M(\eta_m)|+\frac1n\sum_{i=1}^n\omega_{\eta_m,r_{\eta_m}}(U_i)+\E\,\omega_{\eta_m,r_{\eta_m}}(U),
\]
so by the strong law $\limsup_n\sup_\zeta|M_n(\zeta)-M(\zeta)|\le2\varepsilon$ almost surely.
\end{proof}

\subsection{Proof of Theorem~\ref{thm:consistency}(b) and (c)}\label{app:concentration}

By the envelope every $\ell_\eta$ is $c_0$-integrable, so $M(\eta_0)-M(\eta)=\KL(c_0,q_\eta)\ge\frac12\|c_0-q_\eta\|_1^2$ by Pinsker's inequality. The map $\eta\mapsto q_\eta$ is continuous into $L^1$ by Scheff\'e's lemma, so $D=\{\eta:\|q_\eta-c_0\|_1>\varepsilon\}$ is open, and $M(\eta)<M(\eta_0)-\gamma$ on $D$ with $\gamma=\varepsilon^2/2$. Let $e_n=\sup_\eta|\hM(\eta)-M(\eta)|$, which tends to zero almost surely by part (a). For $0<a<\gamma$ the set $V_a=\{\eta:M(\eta)>M(\eta_0)-a\}$ is an open neighborhood of $\eta_0$, so $\Pi(V_a)>0$ by Assumption~\ref{ass:support}. Bounding $\exp\{n\hM\}$ above on $D$ and below on $V_a$,
\begin{align*}
\hPi(D)=\frac{\int_D\exp\{n\hM(\eta)\}\,\Pi(d\eta)}{\int_H\exp\{n\hM(\eta)\}\,\Pi(d\eta)}
&\le\frac{\exp[n\{M(\eta_0)-\gamma+e_n\}]}{\Pi(V_a)\exp[n\{M(\eta_0)-a-e_n\}]}\\
&=\frac{\exp\{-n(\gamma-a-2e_n)\}}{\Pi(V_a)}\to0 .
\end{align*}
The normalizing integral is finite and positive because $\hM$ is continuous on the compact set $H$. This proves (b). For (c), Fubini's theorem and $\|q_\eta-c_0\|_1\le2$ give $\|\overline q_n-c_0\|_1\le\int\|q_\eta-c_0\|_1\,\hPi(d\eta)\le\varepsilon+2\hPi(D)$, and $\varepsilon$ is arbitrary.

\subsection{Standardization within observed strata}\label{app:strata}

\begin{sproposition}\label{prop:s-strata}
Let $(S_i,X_i)$ be independent copies of $(S,X)$, where $S$ takes values in a finite set $\mathcal S$ with $\Pr(S=s)>0$ and the conditional distribution functions $F_{j\mid s}$ of $X_j$ given $S=s$ are continuous. Let $U=(F_{1\mid S}(X_1),\ldots,F_{d\mid S}(X_d))$ have density $c_0$, and compute $\hU_{ij}$ from ranks, or from kernel estimates with bandwidths tending to zero, within the stratum of observation $i$. Under Assumptions~\ref{ass:envelope} and~\ref{ass:support}, the conclusions of Theorem~\ref{thm:consistency} hold. The target $c_0=\sum_s\Pr(S=s)c_{0\mid s}$ is the stratum-weighted average of the conditional copula densities $c_{0\mid s}$ of $X$ given $S=s$, which need not coincide.
\end{sproposition}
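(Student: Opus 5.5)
The plan is to reduce Proposition~\ref{prop:s-strata} to the machinery already built for Theorem~\ref{thm:consistency}. The proofs of parts (b) and (c) use only part (a), Assumption~\ref{ass:support} and the identity $M(\eta_0)-M(\eta)=\KL(c_0,q_\eta)$, where $M(\eta)=\E\log q_\eta(U)$ with $U$ the stratum-standardized vector. It therefore suffices to establish (a), that is, $\sup_\eta|\hM(\eta)-M(\eta)|\to0$ almost surely for the within-stratum pseudo-observations.

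First, $U$ has uniform margins, since $F_{j\mid S}(X_j)$ is uniform conditionally on each $S=s$ by continuity of $F_{j\mid s}$. Its density is $\sum_s\Pr(S=s)c_{0\mid s}$, obtained by conditioning on $S$. The pairs $(S_i,U_i)$ are i.i.d., so the uniform law in the proof of Theorem~\ref{thm:consistency}(a) applies verbatim to $M_n(\eta)=n^{-1}\sum_i\ell_\eta(U_i)$. It uses only that the $U_i$ are i.i.d.\ with density $c_0$, together with integrability of $E(U)$, and the latter holds because the margins are uniform.

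Second, I would verify the input~\eqref{eq:transfer-input} of Lemma~\ref{lem:transfer}, which is the only place where the marginal estimation enters. Let $n_s=\#\{i:S_i=s\}$. By the strong law $n_s/n\to\Pr(S=s)>0$, so $n_s\to\infty$ almost surely. Conditionally on the stratum labels, the $X_i$ with $S_i=s$ are i.i.d.\ with continuous margins $F_{j\mid s}$. Lemma~\ref{lem:margins} applied within stratum $s$ then gives $\max_{i:S_i=s}\|\hU_i-U_i\|_\infty\to0$ and $\limsup n_s^{-1}\sum_{i:S_i=s}E(\hU_i)^2<\infty$. Ranks are scaled by $n_s+1$, and bandwidths are $h_{n_sj}\to0$ along the stratum subsequence.

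Because $\mathcal S$ is finite, the maximum over all $i$ is a maximum of finitely many vanishing quantities. The overall moment satisfies
\[
\frac1n\sum_{i=1}^nE(\hU_i)^2=\sum_s\frac{n_s}{n}\cdot\frac1{n_s}\sum_{i:S_i=s}E(\hU_i)^2 ,
\]
and this has finite limit superior. For ranks the bound is even deterministic, $(d+1)(A^2+16dB^2)$, by~\eqref{eq:rank-moment} applied in each stratum.

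Lemma~\ref{lem:transfer} then yields (a), and (b) and (c) follow exactly as in Section~\ref{app:concentration}.

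The main obstacle is making the ``apply Lemma~\ref{lem:margins} within the stratum'' step rigorous, because the stratum sample sizes are random. I would handle it by noting that the subsequence of observations with $S_i=s$ is itself an i.i.d.\ sequence from the conditional law, by the strong Markov/thinning property of i.i.d.\ sequences. Glivenko--Cantelli along that sequence then holds almost surely, and evaluating it at the random index $n_s\to\infty$ preserves almost-sure convergence.

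For kernel bandwidths I would interpret the hypothesis as $h\to0$ along each stratum's growing sample. The kernel lower bound $\widehat F_j(X_{(a)j})\ge\alpha_ja/n_s$ is purely combinatorial within the stratum, so it carries over unchanged.
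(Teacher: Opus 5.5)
Your proposal is correct and follows the paper's proof essentially step for step. Both arguments go through the same stages: uniform margins and the mixture density of $U$; the strong law $n_s/n\to\Pr(S=s)>0$; the fact that the stratum-$s$ observations, taken in their original order, form an i.i.d.\ sequence from the conditional law, so that Lemma~\ref{lem:margins} applies within each stratum; the maximum over finitely many strata; the pooled moment as an $n_s/n$-weighted average of within-stratum bounds that do not depend on $n_s$; and finally Lemma~\ref{lem:transfer} and the arguments for (a)--(c), unchanged. Your remarks on evaluating Glivenko--Cantelli at the random index $n_s$ only make explicit a step that the paper states in one sentence.
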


\begin{proof}
Let $p_s=\Pr(S=s)$. Given $S=s$, each $F_{j\mid s}(X_j)$ is uniform, so $\Pr(U_j\le t)=\sum_sp_st=t$: the vectors $U_i$ are independent with uniform margins and density $c_0=\sum_sp_sc_{0\mid s}$. For each $s$, $n_s/n\to p_s>0$ almost surely, and the observations with $S_i=s$, taken in their original order, form an independent sequence with the conditional law of $X$ given $S=s$. The proof of Lemma~\ref{lem:margins} therefore applies within each stratum, with the conditional distribution functions in place of $F_j$; since $\mathcal S$ is finite, $\max_{i,j}|\hU_{ij}-U_{ij}|\to0$. The moment bounds within strata do not depend on $n_s$, and the pooled average in~\eqref{eq:transfer-input} is their average with weights $n_s/n$. Hence~\eqref{eq:transfer-input} holds, and Lemma~\ref{lem:transfer} and Sections~\ref{app:transfer}--\ref{app:concentration} apply without change to the independent vectors $U_i$ with density $c_0$.
\end{proof}

\subsection{Proof of Corollary~\ref{cor:families}}\label{app:families}

Constants below depend only on $d$ and the compact parameter sets.

\emph{Mixtures.} If $e^{-E(u)}\le c_\theta(u)\le e^{E(u)}$ for every component, the same bounds hold for every convex combination $\sum_kw_kc_{\theta_k}$, with or without zero weights; positivity and continuity of the components give joint continuity of $(w,\theta_1,\ldots,\theta_K,u)\mapsto\log\sum_kw_kc_{\theta_k}(u)$. A finite union of compact sets is compact, and envelopes for different families combine by taking the larger constants. It therefore suffices to establish continuity and the two-sided envelope for each family.

\emph{Gaussian.} With $z_j=\Phi^{-1}(u_j)$, $\log c_R(u)=-\frac12\log|R|-\frac12z^\top(R^{-1}-I)z$. For $R\in\Rset$ the eigenvalues of $R$ lie in $[\kappa,d]$, so $|\log c_R(u)|\le C(1+\|z\|^2)$. The tail bound $1-\Phi(x)\le e^{-x^2/2}$ for $x\ge0$ gives $\Phi^{-1}(t)^2\le-2\log\min(t,1-t)\le2b(t)$.

\emph{Student.} With $x_j=t_\nu^{-1}(u_j)$ and $Q=x^\top R^{-1}x$,
\[
\log c_{\nu,R}(u)=a_{\nu,R}-\frac{\nu+d}2\log\left(1+\frac Q\nu\right)+\frac{\nu+1}2\sum_{j=1}^d\log\left(1+\frac{x_j^2}\nu\right),
\]
where $a_{\nu,R}$ is continuous, hence bounded, on $[\nu_1,\nu_2]\times\Rset$. Since $Q\le\|x\|^2/\kappa$ and $1+\|x\|^2\le\prod_j(1+x_j^2)$, we get $|\log c_{\nu,R}(u)|\le C\{1+\sum_j\log(1+x_j^2)\}$. Let $p=\nu_1/2$ and let $m_p=\sup_{\nu\in[\nu_1,\nu_2]}\E|T_\nu|^p$, which is finite for $T_\nu\sim t_\nu$ because $p<\nu$. Markov's inequality gives $\min(t,1-t)=\Pr\{|T_\nu|\ge|t_\nu^{-1}(t)|\}/2\le m_p|t_\nu^{-1}(t)|^{-p}/2$, hence $\log\{1+t_\nu^{-1}(t)^2\}\le C+(2/p)\,b(t)$. Joint continuity follows from that of $(\nu,t)\mapsto t_\nu^{-1}(t)$. Different components may have different $\nu$.

\emph{Clayton.} For $\theta>0$ and $L_j=-\log u_j$,
\[
c_\theta(u)=\prod_{m=1}^{d-1}(1+m\theta)\prod_{j=1}^du_j^{-1-\theta}S^{-d-1/\theta},\qquad S=\sum_{j=1}^du_j^{-\theta}-d+1 .
\]
Induction on $d$ with $(\alpha-1)(\beta-1)\ge0$ for $\alpha,\beta\ge1$ gives $1\le S\le\prod_ju_j^{-\theta}$, hence
\[
\log\prod_{m=1}^{d-1}(1+m\theta)-(d-1)\theta\sum_jL_j\le\log c_\theta(u)\le\log\prod_{m=1}^{d-1}(1+m\theta)+(1+\theta)\sum_jL_j,
\]
and $L_j\le b(u_j)$. As $\theta\downarrow0$, $\log S=\theta\sum_jL_j+O(\theta^2)$ uniformly on compact subsets of $(0,1)^d$, so $\log c_\theta\to0$ there; thus $(\theta,u)\mapsto\log c_\theta(u)$ is continuous on $[0,M_{\mathrm C}]\times(0,1)^d$ with the independence copula at $\theta=0$.

\emph{Gumbel.} For $\gamma\ge1$ write $a=1/\gamma$, $x_j=-\log u_j$ and $S=\sum_jx_j^\gamma$. The generator $\psi(s)=\exp(-s^a)$ satisfies $(-1)^d\psi^{(d)}(s)=e^{-s^a}H_d(s)$ with $H_d(s)=\sum_{k=1}^dA_{d,k}s^{ak-d}$, where $A_{1,1}=a$ and $A_{d+1,k}=(d-ak)A_{d,k}+aA_{d,k-1}$ (with $A_{d,0}=A_{d,d+1}=0$). The coefficients are nonnegative because $a\le1$, continuous in $a$, and $A_{d,d}=a^d$. As $ak-d\in[-d,0]$, this gives $|\log H_d(S)|\le C+d|\log S|$. The density is
\[
\log c_\gamma(u)=\sum_jx_j-S^a+d\log\gamma+(\gamma-1)\sum_j\log x_j+\log H_d(S).
\]
Here $0\le\sum_jx_j-S^a\le\sum_jx_j$, because $S^a$ is the $\ell^\gamma$ norm of $(x_1,\ldots,x_d)$; $|\log S|\le\log d+\gamma\sum_j|\log x_j|$; $x_j\le b(u_j)$; and $|\log x_j|\le b(u_j)$, because $\log x_j<x_j$ and $x_j\ge1-u_j$. This proves the envelope. At $\gamma=1$, $H_d\equiv1$ and $c_1\equiv1$, and all expressions are continuous in $(\gamma,u)$.

\emph{Frank.} For $\theta>0$ let $q=1-e^{-\theta}$ and $y=\prod_j(1-e^{-\theta u_j})/q^{d-1}\in[0,q]$. Differentiating the generator $\psi(s)=-\theta^{-1}\log(1-qe^{-s})$ gives
\[
c_\theta(u)=\left(\frac\theta q\right)^{d-1}e^{-\theta\sum_ju_j}\sum_{m=1}^\infty m^{d-1}y^{m-1}.
\]
Since $0\le y\le q\le1-e^{-M_{\mathrm F}}<1$, $c_\theta$ is continuous and bounded above and away from zero on $[0,M_{\mathrm F}]\times[0,1]^d$, with $c_0\equiv1$. For $d=2$ and $\theta\ne0$,
\[
c_\theta(u,v)=\frac{\theta(1-e^{-\theta})e^{-\theta(u+v)}}{D_\theta(u,v)^2},
\]
where $D_\theta(u,v)=e^{-\theta u}\bigl(1-e^{-\theta(1-u)}\bigr)+e^{-\theta v}\bigl(1-e^{-\theta u}\bigr)$.
Both terms of $D_\theta$ have the sign of $\theta$ and they do not vanish simultaneously on $[0,1]^2$, and $D_\theta/\theta\to1$ as $\theta\to0$; hence the density extends continuously and positively to any compact $\theta$-interval times $[0,1]^2$.

\emph{Prior support.} A Dirichlet distribution with positive parameters has a positive density on the interior of $\Delta_{K-1}$, and every relatively open subset of $\Delta_{K-1}$, including one that meets the boundary, contains an open subset of the interior. Hence the product prior charges every neighborhood of every point of $\Delta_{K-1}\times\Theta^K$.

\subsection{Proof of Proposition~\ref{prop:functionals}}\label{app:functionals}

Let $\varepsilon>0$. We claim there is $\delta>0$ such that $\operatorname{dist}(g(\eta),g(\eta_0))\le\varepsilon$ whenever $\|q_\eta-c_0\|_1<\delta$. Otherwise there are $\eta_m$ with $\|q_{\eta_m}-c_0\|_1\to0$ and $\operatorname{dist}(g(\eta_m),g(\eta_0))>\varepsilon$. By compactness a subsequence converges to some $\eta^*$; continuity of $\eta\mapsto q_\eta$ in $L^1$ gives $q_{\eta^*}=c_0=q_{\eta_0}$, so $g(\eta^*)=g(\eta_0)$, contradicting the continuity of $g$. Hence $\hPi\{\operatorname{dist}(g(\eta),g(\eta_0))>\varepsilon\}\le\hPi\{\|q_\eta-c_0\|_1\ge\delta\}\to0$ by Theorem~\ref{thm:consistency}.

\emph{Quantities listed after Proposition~\ref{prop:functionals}.} For the families of Corollary~\ref{cor:families} each of these is a continuous function of $\eta$ that depends on $\eta$ only through $q_\eta$; since the densities are continuous, $q_\eta=q_{\eta'}$ almost everywhere implies equality everywhere.

(i) \emph{Functionals of the distribution function.} Let $C_\eta$ and $c_\eta$ be the distribution function and density of a bivariate margin of $q_\eta$. Then $\sup_{u,v}|C_\eta(u,v)-C_{\eta'}(u,v)|\le\|q_\eta-q_{\eta'}\|_1$ and $\|c_\eta-c_{\eta'}\|_1\le\|q_\eta-q_{\eta'}\|_1$, and $\eta\mapsto q_\eta$ is continuous in $L^1$. Hence Spearman's $\rho=12\int C_\eta-3$ and $\chi_L(t)$, $\chi_U(t)$ at fixed $t$ are continuous in $\eta$, and so is Kendall's $\tau=4\int C_\eta\,c_\eta-1$, because $|\int C_mc_m-\int Cc|\le\|C_m-C\|_\infty+\|c_m-c\|_1$.

(ii) \emph{Tail coefficients.} For a bivariate mixture,
\[
\lambda_L=\lim_{t\downarrow0}\sum_kw_k\frac{C_{\theta_k}(t,t)}t=\sum_kw_k\lambda_L(c_{\theta_k}),
\]
and the component coefficients ($2^{-1/\theta}$ for Clayton, extended by $0$ at $\theta=0$, and $0$ for Frank and Gaussian) are continuous on the sets of Corollary~\ref{cor:families}; the same holds for $\lambda_U$, with $2-2^{1/\gamma}$ for Gumbel, and for the Student coefficient $2t_{\nu+1}[-\{(\nu+1)(1-\rho)/(1+\rho)\}^{1/2}]$. This continuity in the parameter is what the argument uses; compactness and $L^1$ consistency of the density do not by themselves imply consistency of tail coefficients.

(iii) \emph{Conditional probabilities at a point.} Fix $a\in(0,1)^{d-1}$ and $t\in(0,1)$. The $(d-1)$-dimensional margin of each component belongs to the same family: an exchangeable Clayton, Gumbel or Frank copula with the same parameter, a Gaussian or Student copula with the principal submatrix $R_{-d}$, whose smallest eigenvalue is at least $\kappa$ by Cauchy interlacing, or the uniform density if $d=2$. Its density $m_\theta(a)$ is positive and jointly continuous in $(\theta,a)$, by the arguments of Section~\ref{app:families} in dimension $d-1$. For these families the conditional distribution function $v\mapsto\Pr_\theta(U_d\le v\mid U_{-d}=a)$ is available in closed form, is continuous on $[0,1]$ and equals $0$ and $1$ at the endpoints, so $\int_0^1c_\theta(a,v)\,dv=m_\theta(a)$ for every $a$. For the mixture let $m_\eta(a)=\sum_kw_km_{\theta_k}(a)>0$ and $f_\eta(v)=q_\eta(a,v)/m_\eta(a)$, a probability density on $(0,1)$ that is determined by $q_\eta$. If $\eta_m\to\eta$ in $H$, then $f_{\eta_m}(v)\to f_\eta(v)$ for every $v$, by continuity of $q_\eta(a,v)$ and $m_\eta(a)$ in $\eta$, and Scheff\'e's lemma gives $\int|f_{\eta_m}-f_\eta|\to0$. Hence $\Pr_\eta(U_d>t\mid U_{-d}=a)=\int_t^1f_\eta$ is continuous in $\eta$. This route through the model is needed: $L^1$ consistency of the joint density alone does not control conditional probabilities on the null set $\{U_{-d}=a\}$.

\subsection{Proof of Theorem~\ref{thm:identification}}\label{app:identification}

\emph{Gaussian copulas.} Suppose $\sum_{k=1}^m\beta_kc_{R_k}=0$ on $(0,1)^d$ for distinct nonsingular $R_1,\ldots,R_m$. Multiplying by $\prod_j\phi(z_j)$ with $z_j=\Phi^{-1}(u_j)$ gives $\sum_k\beta_k\phi_{R_k}(z)=0$ for all $z\in\R^d$, where $\phi_R$ is the $N(0,R)$ density. Each set $\{v:v^\top(R_k^{-1}-R_l^{-1})v=0\}$, $k\ne l$, is the zero set of a nonzero quadratic form and has Lebesgue measure zero, so there is $v$ for which $a_k=v^\top R_k^{-1}v$ are pairwise distinct. Along $z=tv$ and with $s=t^2$, $\sum_k\beta_k|R_k|^{-1/2}e^{-a_ks/2}=0$ for all $s\ge0$. Differentiating $m-1$ times at $s=0$ gives a Vandermonde system in $a_1,\ldots,a_m$, so every $\beta_k=0$. The same argument, with $(1+s/\nu)^{-(\nu+d)/2}$ in place of $e^{-s/2}$, applies to Student copulas with a common $\nu$, because all derivatives of this function at $s=0$ are nonzero.

\emph{Dimension two.} Suppose
\begin{equation}\label{eq:relation}
\sum_k\beta_kc_k(u,v)=0,\qquad(u,v)\in(0,1)^2,
\end{equation}
for finitely many distinct members $c_k$ of the family; by continuity, a relation that holds almost everywhere holds everywhere. Put $L=-\log u$ and $y=-\log v$. As $L\to\infty$ with $v$ fixed,
\begin{align}
c^{\mathrm G}_\gamma(u,v)&=\{1+o(1)\}\,y^{\gamma-1}v^{-1}L^{-(\gamma-1)}, &&\gamma>1,\label{eq:gumbel-edge}\\
c^{\mathrm C}_\theta(u,v)&=\{1+o(1)\}\,(1+\theta)v^{-1-\theta}e^{-\theta L}, &&\theta>0,\label{eq:clayton-edge}\\
c^{\mathrm N}_\rho(u,v)&=\exp\{-\rho^2L/(1-\rho^2)+O(L^{1/2})\}, &&\rho\ne0.\label{eq:gauss-edge}
\end{align}
For~\eqref{eq:gumbel-edge}, the bivariate Gumbel density equals $e^{L+y-S^{1/\gamma}}(Ly)^{\gamma-1}S^{2/\gamma-2}\{1+(\gamma-1)S^{-1/\gamma}\}$ with $S=L^\gamma+y^\gamma$, and $S^{1/\gamma}=L+\gamma^{-1}y^\gamma L^{1-\gamma}+o(L^{1-\gamma})$. Equation~\eqref{eq:clayton-edge} follows by factoring $u^{-\theta}$ out of $u^{-\theta}+v^{-\theta}-1$ in the Clayton density, and~\eqref{eq:gauss-edge} from $\Phi^{-1}(u)^2=2L+O(\log L)$.

\emph{Step 1.} As $u\downarrow0$, a Frank density converges to $\theta(1-e^{-\theta})^{-1}e^{-\theta v}$, the independence density equals one, and the Clayton, Gumbel and nonindependent Gaussian densities tend to zero by~\eqref{eq:gumbel-edge}--\eqref{eq:gauss-edge}. The limit of~\eqref{eq:relation} is therefore a linear relation among distinct exponentials $e^{-\theta v}$, with $\theta=0$ for independence, on the interval $(0,1)$; distinct exponentials are linearly independent, so all Frank and independence coefficients vanish.

\emph{Step 2.} By~\eqref{eq:clayton-edge} and~\eqref{eq:gauss-edge}, Clayton and Gaussian terms are $o(L^{-r})$ for every $r>0$. Dividing~\eqref{eq:relation} by the Gumbel term with the smallest $\gamma$ and letting $L\to\infty$ shows that its coefficient vanishes; repeating the argument removes every Gumbel term.

\emph{Step 3.} Take $v=\Phi(1)$ and $u=\Phi(-x)$ with $x\to\infty$. Then, exactly,
\[
c^{\mathrm N}_\rho\{\Phi(-x),\Phi(1)\}=A_\rho\exp\left\{-\frac{\rho^2x^2}{2(1-\rho^2)}-\frac{\rho x}{1-\rho^2}\right\},\qquad A_\rho>0,
\]
and by~\eqref{eq:clayton-edge} and Mills' ratio, $c^{\mathrm C}_\theta\{\Phi(-x),\Phi(1)\}=\{1+o(1)\}B_\theta x^{-\theta}e^{-\theta x^2/2}$ with $B_\theta>0$. The exponent pairs $\bigl(\rho^2/[2(1-\rho^2)],\ \rho/(1-\rho^2)\bigr)$ of the Gaussian terms and $(\theta/2,0)$ of the Clayton terms are pairwise distinct, because $\rho\mapsto\rho/(1-\rho^2)$ is strictly increasing and nonzero for $\rho\ne0$. Order the pairs lexicographically. The term with the smallest pair exceeds every other term by a factor $\exp(cx^2)$ or $\exp(cx)$ with $c>0$, which no power of $x$ can offset; dividing~\eqref{eq:relation} by it and letting $x\to\infty$ shows that its coefficient vanishes, and induction removes the rest. This proves the theorem for $d=2$, with Frank parameters of either sign.

\emph{Dimension $d>2$.} Integrate a relation $\sum_k\beta_kc_k=0$ on $(0,1)^d$ over $u_3,\ldots,u_d$. The bivariate margin of an exchangeable Clayton, Gumbel or Frank copula belongs to the same family with the same parameter, that of a Gaussian copula with matrix $R$ is the bivariate Gaussian copula with correlation $R_{12}$, and that of independence is independence. After grouping Gaussian and independence terms with equal margins, each Archimedean component appears in the bivariate relation with its own coefficient, which vanishes by the case $d=2$. The original relation then involves only distinct Gaussian copulas, independence being $R=I$, and their coefficients vanish by the first part of the proof.

\subsection{Proof of Corollary~\ref{cor:mixing}}\label{app:mixing}

The set $\mathcal C=\{c_\theta:\theta\in\Theta\}$ with the $L^1$ metric is compact, as the continuous image of $\Theta$. For fixed $u\in(0,1)^d$, the evaluation $c\mapsto c(u)$ is well defined and continuous on $\mathcal C$: the map $\theta\mapsto c_\theta$ is a closed surjection from a compact space onto a Hausdorff space, hence a quotient map, and $\theta\mapsto c_\theta(u)$ is continuous and constant on its fibres. The set $\mathcal P_K(\mathcal C)$ of probability measures on $\mathcal C$ with at most $K$ atoms is compact under $W$, as the continuous image of $\Delta_{K-1}\times\mathcal C^K$. If $G_m\to G$ in $W$, then $q_{G_m}(u)=\int c(u)\,G_m(dc)\to q_G(u)$ for every $u$, since $c\mapsto c(u)$ is bounded and continuous on $\mathcal C$, and Scheff\'e's lemma gives $\|q_{G_m}-q_G\|_1\to0$. By Theorem~\ref{thm:identification}, or by the classical results for Gaussian and common-$\nu$ Student copulas, the map $G\mapsto q_G$ is injective on $\mathcal P_K(\mathcal C)$, hence a homeomorphism onto its image. Consequently $g(\eta)=G_\eta$ is continuous and depends on $\eta$ only through $q_\eta$, and Proposition~\ref{prop:functionals} gives (a).

For (b), let $G_0=G_{\eta_0}=\sum_{k=1}^Kw_{0k}\delta_{c_{0k}}$ with distinct $c_{0k}$ and $w_{0k}>0$, fix $r$ smaller than half the smallest distance between the $c_{0k}$, and let $N_k$ be the open $L^1$ ball of radius $r$ about $c_{0k}$. The function $f_k(c)=\max\{0,r-\|c-c_{0k}\|_1\}$ is $1$-Lipschitz and vanishes off $N_k$, so by Kantorovich--Rubinstein duality $\int f_k\,dG\ge rw_{0k}-W(G,G_0)$. If $W(G,G_0)<\delta<r\min_kw_{0k}$, each $N_k$ therefore contains an atom of $G$; as $G$ has at most $K$ atoms and the $N_k$ are disjoint, each contains exactly one, say $c_{\sigma(k)}$, with weight $w_{\sigma(k)}=G(N_k)\ge\int f_k\,dG/r\ge w_{0k}-\delta/r$. Since the weights sum to one, also $w_{\sigma(k)}\le w_{0k}+(K-1)\delta/r$. Choosing $r$ and then $\delta$ small makes $\max_k\{|w_{\sigma(k)}-w_{0k}|+\|c_{\sigma(k)}-c_{0k}\|_1\}$ smaller than any given $\varepsilon$, and (b) follows from (a).

For (c), the finitely many atoms of $G_0$ lie outside the closed set $\mathcal C_F$, so
\[
\delta_0=\min_{c\in\operatorname{supp}G_0}\operatorname{dist}(c,\mathcal C_F)>0,
\]
distances being in $L^1$. The function $f(c)=\min\{\operatorname{dist}(c,\operatorname{supp}G_0),\delta_0\}$ is $1$-Lipschitz, vanishes on $\operatorname{supp}G_0$ and equals $\delta_0$ on $\mathcal C_F$. By Kantorovich--Rubinstein duality, $\delta_0G(\mathcal C_F)\le\int f\,dG-\int f\,dG_0\le W(G,G_0)$. Hence $\{\eta:G_\eta(\mathcal C_F)>\varepsilon\}\subset\{\eta:W(G_\eta,G_0)>\delta_0\varepsilon\}$, and (c) follows from (a). Since $\eta\mapsto G_\eta(\mathcal C_F)=\sum_kw_k\mathbf 1\{c_{\theta_k}\in\mathcal C_F\}$ is Borel, the event in (c) is measurable.

\emph{Class probabilities.} The next result is mentioned after Corollary~\ref{cor:mixing}.

\begin{sproposition}\label{prop:s-class}
Under the conditions of Corollary~\ref{cor:mixing}(b), the class probabilities $r_{\eta,k}(u)=w_kc_{\theta_k}(u)/q_\eta(u)$ satisfy $\hPi\{\eta:\min_\sigma\int\sum_k|r_{\eta,\sigma(k)}-r_{\eta_0,k}|\,c_0>\varepsilon\}\to0$ almost surely for every $\varepsilon>0$. If $K\ge2$ and every $w_{0k}>0$, then $\max_kr_{\eta_0,k}(u)<1$ for every $u$. If $\Pr(Z=k)=w_k$ and $U$ given $Z=k$ has copula density $c_{\theta_k}$, then $Z$ is independent of each coordinate $U_j$.
\end{sproposition}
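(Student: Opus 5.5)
The plan treats the three claims of Proposition~\ref{prop:s-class} separately. The first reduces to Corollary~\ref{cor:mixing}(b) by a continuity argument. The other two are direct computations.

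\emph{Convergence.} Consider the map
\[
\eta\mapsto D(\eta)=\min_\sigma\int\sum_k|r_{\eta,\sigma(k)}-r_{\eta_0,k}|\,c_0 .
\]
I would show that for every $\varepsilon>0$ there is $\delta>0$ such that
\[
\min_\sigma\max_k\bigl\{|w_{\sigma(k)}-w_{0k}|+\|c_{\theta_{\sigma(k)}}-c_{\theta_{0k}}\|_1\bigr\}<\delta
\]
implies $D(\eta)\le\varepsilon$. The claim then follows from Corollary~\ref{cor:mixing}(b).

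To prove this implication, fix the minimizing permutation and write $r_k=w_kc_k/q$ and $r_{0k}=w_{0k}c_{0k}/c_0$. The algebraic identity
\[
r_k-r_{0k}=\frac{w_kc_k-w_{0k}c_{0k}}{c_0}+r_k\,\frac{c_0-q}{c_0}
\]
and $0\le r_k\le1$ give
\[
\int|r_k-r_{0k}|\,c_0\le\|w_kc_k-w_{0k}c_{0k}\|_1+\|q-c_0\|_1 .
\]
Both terms are bounded by constant multiples of $\delta$. Indeed, $\|w_kc_k-w_{0k}c_{0k}\|_1\le|w_k-w_{0k}|+\|c_k-c_{0k}\|_1$, and $\|q-c_0\|_1\le\sum_k$ of the same bounds. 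Summing over $k$ gives $D(\eta)\le 2K\delta$.

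This is the cleanest route, since it avoids any uniform-continuity argument over $H$. The only point to check is measurability of $D$. This holds because $\eta\mapsto r_{\eta,k}(u)$ is jointly continuous, so the integral is continuous in $\eta$ by dominated convergence with bound $2c_0$.

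\emph{Strict bound.} Fix $u$ and suppose $K\ge2$ and every $w_{0k}>0$. Every component density is strictly positive on $(0,1)^d$; this was established in the proof of Corollary~\ref{cor:families}. Hence each term $w_{0l}c_{0l}(u)$ with $l\ne k$ is positive. Therefore
\[
r_{\eta_0,k}(u)=1-\frac{\sum_{l\ne k}w_{0l}c_{0l}(u)}{c_0(u)}<1 .
\]

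\emph{Independence of $Z$ from each coordinate.} Given $Z=k$, the coordinate $U_j$ has the $j$-th margin of the copula $c_{\theta_k}$, which is uniform. The conditional law of $U_j$ given $Z$ is therefore the same for every $k$, so $Z$ is independent of $U_j$.

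I expect no real obstacle. The only delicate point is the first part, where the permutation chosen in Corollary~\ref{cor:mixing}(b) must be the same one used in $D(\eta)$. Since $D$ takes a minimum over permutations, it is bounded by its value at that permutation, so this causes no difficulty.
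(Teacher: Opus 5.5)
Your proof is correct. The second and third parts match the paper's: positivity of all class probabilities at $\eta_0$ forces each to be below one, and the paper writes your uniform-margin argument as $\Pr(U_j\le t,Z=k)=w_kt=\Pr(U_j\le t)\Pr(Z=k)$.

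For the first claim you take a different route. The paper argues qualitatively. It uses continuity of $(w,c_1,\ldots,c_K)\mapsto w_kc_k(u)/\sum_lw_lc_l(u)$ on $\Delta_{K-1}\times\mathcal C^K$, which rests on the continuity of the evaluation $c\mapsto c(u)$ on $\mathcal C$ obtained from the quotient-map argument in the proof of Corollary~\ref{cor:mixing}. Dominated convergence then makes the matched distance continuous at the truth, and the subsequence argument from the proof of Proposition~\ref{prop:functionals} turns this into the required $\delta$.

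Your identity $r_k-r_{0k}=(w_kc_k-w_{0k}c_{0k})/c_0+r_k(c_0-q)/c_0$ replaces all of that with an explicit Lipschitz bound. It needs neither the evaluation-map continuity nor compactness of $\mathcal C$, and it holds for arbitrary positive densities. It would also carry any contraction rate in the metric of Corollary~\ref{cor:mixing}(b) over to the class probabilities at the cost of a factor $2K$, which the paper's argument cannot do.

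One small correction concerns the constant. Summing your per-$k$ inequality as written gives $K(K+1)\delta$, since $\|q-c_0\|_1\le K\delta$ appears once for every $k$. To get $2K\delta$, sum over $k$ before integrating and use $\sum_kr_{\sigma(k)}=1$, so that the second terms contribute $\|q-c_0\|_1$ only once. Either constant suffices for the conclusion.
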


\begin{proof}
Fix $u\in(0,1)^d$. The map $(w,c_1,\ldots,c_K)\mapsto w_kc_k(u)/\sum_lw_lc_l(u)$ is continuous on $\Delta_{K-1}\times\mathcal C^K$ because the densities are positive, and class probabilities are bounded by one, so by dominated convergence the matched distance $\int\sum_k|r_{\eta,\sigma(k)}-r_{\eta_0,k}|\,c_0$ is continuous at the truth in the sense of Corollary~\ref{cor:mixing}(b); the first claim follows from (b) as in Section~\ref{app:functionals}. If $K\ge2$ and every $w_{0k}>0$, then $r_{\eta_0,k}(u)>0$ for all $k$, so $\max_kr_{\eta_0,k}(u)<1$. Finally, $\Pr(U_j\le t,Z=k)=w_kt=\Pr(U_j\le t)\Pr(Z=k)$ for every $j$ and $t$, so $Z$ is independent of each coordinate $U_j$.
\end{proof}

\subsection{Proof of Corollary~\ref{cor:fourfamily}}\label{app:fourfamily}

Each $\Theta_k$ is a compact subset of the corresponding parameter set in Corollary~\ref{cor:families}. Continuity and the envelope of Assumption~\ref{ass:envelope} therefore hold on the compact set $H$ (Section~\ref{app:families}), and the product prior gives positive mass to every open neighborhood of $\eta_0$ in $H$, so Assumption~\ref{ass:support} holds. Theorem~\ref{thm:consistency} and Proposition~\ref{prop:functionals} apply.

Let $c^{(k)}_\theta$ denote the density of family $k$ and $\mathcal C_k=\{c^{(k)}_\theta:\theta\in\Theta_k\}$. The map $\theta\mapsto c^{(k)}_\theta$ is continuous into $L^1$, by pointwise continuity and Scheff\'e's lemma, so $\mathcal C_k$ is compact. It is also injective: Kendall's $\tau$ of a bivariate margin is strictly monotone in the Clayton, Gumbel and Frank parameters, and the correlation matrix of a Gaussian copula is determined by its bivariate margins. Hence $\theta\mapsto c^{(k)}_\theta$ is a homeomorphism of $\Theta_k$ onto $\mathcal C_k$. Because $a_1>0$, $a_2>1$ and $a_3>0$, every element of $\mathcal C=\bigcup_k\mathcal C_k$ is a member of the family of Theorem~\ref{thm:identification}, whose proof treats members as indexed by family and parameter. An element of $\mathcal C_k$ and an element of $\mathcal C_l$, $l\ne k$, carry different indices; if they were equal as functions, their difference would be a nontrivial vanishing linear combination, contradicting the theorem. The four sets are therefore pairwise disjoint and, being compact,
\[
d_{\min}=\min_{k\ne l}\ \inf\{\|c-c'\|_1:c\in\mathcal C_k,\ c'\in\mathcal C_l\}>0 .
\]
For $\eta\in H$ let $G_\eta=\sum_kw_k\delta_{c^{(k)}_{\theta_k}}$, a probability measure on $\mathcal C$ with $G_\eta(\mathcal C_k)=w_k$. The map $\eta\mapsto G_\eta$ is continuous, and $G\mapsto q_G$ is injective on probability measures on $\mathcal C$ with at most four atoms by Theorem~\ref{thm:identification}. The argument for Corollary~\ref{cor:mixing}(a) in Section~\ref{app:mixing} thus applies on $H$ and gives $\hPi\{\eta:W(G_\eta,G_{\eta_0})>\delta\}\to0$ almost surely for every $\delta>0$.

For (a), the function $f_k(c)=\max\{0,1-\operatorname{dist}(c,\mathcal C_k)/d_{\min}\}$ is $d_{\min}^{-1}$-Lipschitz on $\mathcal C$, equals one on $\mathcal C_k$ and vanishes on $\mathcal C_l$ for $l\ne k$. By Kantorovich--Rubinstein duality,
\[
|w_k-w_{0k}|=\Bigl|\int f_k\,dG_\eta-\int f_k\,dG_{\eta_0}\Bigr|\le W(G_\eta,G_{\eta_0})/d_{\min},
\]
and (a) follows. For (b), let $w_{0k}>0$, $c_{0k}=c^{(k)}_{\theta_{0k}}$ and $0<r<d_{\min}$. The function $h(c)=\max\{0,r-\|c-c_{0k}\|_1\}$ is $1$-Lipschitz and vanishes outside the open ball of radius $r$ about $c_{0k}$, which meets no $\mathcal C_l$ with $l\ne k$. Hence $\int h\,dG_{\eta_0}=rw_{0k}$ and $\int h\,dG_\eta=w_kh(c^{(k)}_{\theta_k})\le h(c^{(k)}_{\theta_k})$, so $h(c^{(k)}_{\theta_k})\ge rw_{0k}-W(G_\eta,G_{\eta_0})$. If $W(G_\eta,G_{\eta_0})<rw_{0k}$, then $\|c^{(k)}_{\theta_k}-c_{0k}\|_1<r$. Since $r$ is arbitrary and $\theta\mapsto c^{(k)}_\theta$ is a homeomorphism onto $\mathcal C_k$, (b) follows.

\section{Computation}\label{sec:s-computation}

\subsection{Priors and parameter sets}
All mixture weights have a Dirichlet prior, with parameters $(1,\ldots,1)$ in the simulations and in all fits fitted with Algorithm~1, and $(1/K,\ldots,1/K)$ in the Gaussian and Student fits of Sections~\ref{sec:s-fitness3} and~\ref{sec:s-running}. Component parameters have independent uniform priors on compact sets. In Experiment~1 these are $(0.15,5)$ for the Clayton parameter, $(1.05,4)$ for the Gumbel parameter and $(-0.15,0.90)$ for the exchangeable correlations, the lower limit being raised where needed to keep the matrix positive definite. In Experiments~2 and~3 and in the four-family applications they are $[0.01,12]$ (Clayton), $[1.001,7]$ (Gumbel), $[0.01,30]$ (Frank) and $[-0.95,0.95]$ (Gaussian). Full correlation matrices have a uniform prior on the free off-diagonal entries subject to $\lambda_{\min}(R)\ge0.05$. By Corollary~1 all of these priors satisfy Assumption~2 at every interior parameter.

\subsection{Data-augmentation sampler}
Experiment~1, Experiment~3 (for comparison) and the Gaussian and Student fits of Sections~\ref{sec:s-fitness3} and~\ref{sec:s-running} use the latent-allocation sampler for mixtures \citep{Diebolt1994}. Each iteration draws the allocations from their full conditional, draws the weights from the conjugate Dirichlet conditional, and updates each component parameter by a random-walk Metropolis step that uses only the observations allocated to that component; for bounded scalar parameters the proposal is reflected at the boundary, so that it remains symmetric. For full correlation matrices the proposal perturbs all free entries of one component and is rejected if the smallest eigenvalue falls below 0.05. In Experiment~1 an additional Metropolis block move updates the weight and both component parameters jointly on the logistic scale, using the marginal pseudo-likelihood without allocations; its proposal covariance is estimated during burn-in \citep{Haario2001}. Proposal scales are adapted towards acceptance rates between 0.25 and 0.35 during burn-in only, so the retained draws come from a time-homogeneous chain \citep{Roberts2009}. Experiment~1 uses 5000 iterations with 1500 burn-in iterations and thinning by five; the 61 of 810 fits (7.5\%) in which an effective sample size fell below 100 were rerun with 20000 iterations. The two-component fits of Section~\ref{sec:s-fitness3} use Metropolis updates on the marginal pseudo-posterior, because their allocation chains mixed poorly. Draws of same-family mixtures are relabeled by increasing correlation, which does not affect any reported functional.

\subsection{Marginal sampler}
Algorithm~1 of the main article is used for Experiments~2 and~3 and for all four-family and Gaussian fits to the student and adult data in Sections~6.2--6.4. Each iteration performs three random-walk updates of the log-ratio weights $y$, one random-walk update of each logit parameter $z_k$, independence proposals of each component parameter from its prior every $m=10$ iterations, and one block random-walk update of $(y,z)\in\R^{2K-1}$. The $n\times K$ matrix of component log densities at the pseudo-observations is cached, so weight updates require no density evaluation and an update of one component parameter requires one new column; an iteration thus evaluates about $2K+K/m$ columns, against $K$ for data augmentation. During burn-in, the scales of the single-coordinate updates are adapted every 100 iterations towards an acceptance rate of 0.3, and from iteration 400 the block proposal has covariance $(2.38^2/(2K-1))\widehat\Sigma+10^{-6}I$, where $\widehat\Sigma$ is the covariance of the burn-in history \citep{Haario2001}. All adaptation stops after burn-in. Experiments~2 and~3 use one chain of 8000 iterations with 2000 burn-in iterations and thinning by six. In the applications, the training fits use four chains of 10000 iterations (3000 burn-in, thinning by seven), the full-sample four-family fits four chains of 12000 iterations (3000 burn-in, thinning by nine), and the full-sample Gaussian fits and all sensitivity fits (random ties, sex, year of study, age group) two chains of 6000 to 12000 iterations; every chain retains 1000 draws.

\subsection{Maximum pseudo-likelihood}
The MPL competitor maximizes $\sum_i\log q_\eta(\hU_i)$ over the same parameter set as the Bayesian fit. Weights are parameterized by a logit ($K=2$) or by additive log-ratios ($K=4$) restricted to $[-10,10]$, and component parameters are restricted to their prior intervals. The objective is minimized by L-BFGS-B from ten random starting values, with weights drawn from a Dirichlet$(2,\ldots,2)$ distribution and component parameters drawn uniformly on the scale of Kendall's $\tau$. The best of the ten solutions is reported. A fit is counted as \emph{on the boundary} if a weight is below $10^{-3}$, a log-ratio reaches $\pm10$, or a parameter lies within $10^{-4}$ of its interval end, relative to the interval width; it is counted as \emph{not converged} if the optimizer reports failure for the best start. As a measure of instability we record the largest difference in component Kendall's $\tau$ among starts whose log pseudo-likelihood is within one unit of the best.

\subsection{Accuracy measures}
The divergence $\KL(c_0,\widehat c\,)=\int c_0\log(c_0/\widehat c\,)$ is estimated by averaging $\log\{c_0(V_l)/\widehat c(V_l)\}$ over $16384$ draws $V_l$ from the true copula, the same draws being used for every replicate and method. For the Bayesian fits $\widehat c$ is the posterior predictive density, averaged over 64 evenly spaced posterior draws; for MPL it is the plug-in density. The posterior mean is the point estimate for weights, component parameters, Kendall's $\tau$ and tail coefficients.

\section{Additional simulation results}\label{sec:s-simulation}

Tables~\ref{tab:s-exp1-kl}--\ref{tab:s-exp1-cov} give the full results of Experiment~1. At $n=200$ MPL is less accurate than the pseudo-posterior for every model and every choice of margins. The difference is not caused by the single boundary solution per model, whose removal changes the MPL errors little; it reflects the flatness of the pseudo-likelihood in small samples, over which the posterior averages, and, for the weight, some shrinkage of the posterior mean towards the prior mean $0.5$, close to the true $0.55$. The two methods coincide for $n\ge800$. The oracle shows the cost of estimating the margins, which is small already at $n=800$. Kernel margins are a little less accurate than ranks; Table~\ref{tab:s-cdf} shows that their first-stage error decreases more slowly for N5 and t5, where most of the kernel error comes from the cubed-normal coordinate, whose density is unbounded at zero. Credible intervals cover the true parameter in 80\% to 100\% of the replicates; with 30 replicates these frequencies are imprecise, and nominal coverage is not expected because the pseudo-posterior ignores the estimation of the margins.

\begin{table}[htbp]
\centering\footnotesize
\setlength{\tabcolsep}{3.5pt}
\setstretch{1.1}
\caption{Experiment 1: predictive Kullback--Leibler divergence ($\times10^{3}$), mean (standard error) over 30 replicates.}
\label{tab:s-exp1-kl}
\begin{adjustbox}{max width=\textwidth}
\begin{tabular}{@{}llcccccc@{}}
\toprule
Model & $n$ & \multicolumn{3}{c}{Bayes: oracle, rank, kernel} & \multicolumn{3}{c}{MPL: oracle, rank, kernel}\\
\midrule
CG2 & 200 & 7.29 (0.87) & 7.78 (1.04) & 8.52 (1.12) & 9.12 (1.45) & 9.05 (1.40) & 9.88 (1.55)\\
CG2 & 800 & 2.27 (0.34) & 2.83 (0.41) & 3.62 (0.51) & 2.22 (0.33) & 2.76 (0.39) & 3.73 (0.56)\\
CG2 & 3200 & 0.56 (0.09) & 0.67 (0.11) & 0.89 (0.17) & 0.54 (0.09) & 0.65 (0.10) & 0.91 (0.17)\\
N5 & 200 & 7.43 (1.01) & 10.05 (1.85) & 13.93 (2.80) & 10.64 (1.81) & 14.26 (2.55) & 20.11 (4.09)\\
N5 & 800 & 2.12 (0.34) & 2.46 (0.39) & 3.92 (0.44) & 2.33 (0.39) & 2.77 (0.42) & 4.36 (0.49)\\
N5 & 3200 & 0.49 (0.06) & 0.50 (0.06) & 0.84 (0.12) & 0.48 (0.06) & 0.52 (0.06) & 0.88 (0.12)\\
t5 & 200 & 6.76 (1.15) & 6.58 (1.06) & 6.72 (0.79) & 8.72 (1.33) & 9.25 (1.20) & 8.81 (1.03)\\
t5 & 800 & 1.31 (0.24) & 1.34 (0.25) & 1.64 (0.32) & 1.47 (0.28) & 1.52 (0.30) & 1.90 (0.37)\\
t5 & 3200 & 0.49 (0.08) & 0.47 (0.08) & 0.64 (0.09) & 0.52 (0.08) & 0.51 (0.08) & 0.70 (0.10)\\
\bottomrule
\end{tabular}
\end{adjustbox}
\end{table}

\begin{table}[htbp]
\centering\footnotesize
\setlength{\tabcolsep}{3.5pt}
\setstretch{1.1}
\caption{Experiment 1: root mean squared error of the weight $w$ and of the two component parameters, for the oracle pseudo-posterior (O), the rank-based pseudo-posterior (B) and rank-based MPL (M).}
\label{tab:s-exp1-rmse}
\begin{adjustbox}{max width=\textwidth}
\begin{tabular}{@{}llccccccccc@{}}
\toprule
& & \multicolumn{3}{c}{$w$} & \multicolumn{3}{c}{First parameter} & \multicolumn{3}{c}{Second parameter}\\ Model & $n$ & O & B & M & O & B & M & O & B & M\\
\midrule
CG2 & 200 & 0.148 & 0.158 & 0.160 & 0.553 & 0.424 & 0.541 & 0.366 & 0.349 & 0.504\\
CG2 & 800 & 0.076 & 0.081 & 0.079 & 0.237 & 0.239 & 0.246 & 0.227 & 0.207 & 0.203\\
CG2 & 3200 & 0.042 & 0.044 & 0.044 & 0.100 & 0.099 & 0.099 & 0.088 & 0.096 & 0.097\\
N5 & 200 & 0.102 & 0.099 & 0.153 & 0.069 & 0.077 & 0.108 & 0.057 & 0.052 & 0.082\\
N5 & 800 & 0.054 & 0.052 & 0.056 & 0.037 & 0.040 & 0.042 & 0.033 & 0.038 & 0.040\\
N5 & 3200 & 0.027 & 0.027 & 0.027 & 0.018 & 0.018 & 0.018 & 0.016 & 0.016 & 0.016\\
t5 & 200 & 0.106 & 0.104 & 0.186 & 0.078 & 0.076 & 0.122 & 0.069 & 0.068 & 0.106\\
t5 & 800 & 0.071 & 0.078 & 0.094 & 0.045 & 0.048 & 0.056 & 0.039 & 0.039 & 0.046\\
t5 & 3200 & 0.050 & 0.050 & 0.052 & 0.030 & 0.030 & 0.031 & 0.028 & 0.026 & 0.027\\
\bottomrule
\end{tabular}
\end{adjustbox}
\par\smallskip\footnotesize First and second parameters are $(\theta,\gamma)$ for CG2 and the smaller and larger correlations for N5 and t5.
\end{table}

\begin{table}[htbp]
\centering\footnotesize
\setlength{\tabcolsep}{3.5pt}
\setstretch{1.1}
\caption{Experiment 1: empirical coverage of 95\% credible intervals of the rank-based pseudo-posterior, mean computing time per fit (seconds) and percentage of MPL fits on the boundary of the parameter set or not converged.}
\label{tab:s-exp1-cov}
\begin{adjustbox}{max width=\textwidth}
\begin{tabular}{@{}llccccccc@{}}
\toprule
Model & $n$ & $w$ & First & Second & Bayes (s) & MPL (s) & Boundary (\%) & Not conv. (\%)\\
\midrule
CG2 & 200 & 0.90 & 0.97 & 0.97 & 1.3 & 0.3 & 3 & 0\\
CG2 & 800 & 0.97 & 0.90 & 0.97 & 1.8 & 0.6 & 0 & 0\\
CG2 & 3200 & 0.80 & 0.97 & 0.93 & 5.0 & 1.3 & 0 & 0\\
N5 & 200 & 0.93 & 0.93 & 0.97 & 1.2 & 0.2 & 3 & 0\\
N5 & 800 & 0.97 & 0.97 & 0.90 & 0.9 & 0.3 & 0 & 0\\
N5 & 3200 & 0.97 & 0.97 & 0.97 & 1.8 & 0.5 & 0 & 0\\
t5 & 200 & 1.00 & 1.00 & 1.00 & 1.8 & 0.3 & 3 & 0\\
t5 & 800 & 0.97 & 1.00 & 0.97 & 1.2 & 0.4 & 0 & 0\\
t5 & 3200 & 0.97 & 0.93 & 0.93 & 2.4 & 0.9 & 0 & 0\\
\bottomrule
\end{tabular}
\end{adjustbox}
\par\smallskip\footnotesize Coverage is descriptive: the pseudo-posterior ignores marginal estimation, so nominal coverage is not guaranteed. Bayes times include the 7.5\% of fits whose chains were extended to 20000 iterations because an effective sample size fell below 100.
\end{table}

\begin{table}[htbp]
\centering\small
\setlength{\tabcolsep}{3.5pt}
\setstretch{1.1}
\caption{Experiment 1: root mean squared error of the estimated uniforms $\widehat U_{ij}$ against the true $U_{ij}$, averaged over coordinates and replicates.}
\label{tab:s-cdf}
\begin{adjustbox}{max width=\textwidth}
\begin{tabular}{@{}llcc@{}}
\toprule
Model & $n$ & Ranks & Kernel\\
\midrule
CG2 & 200 & 0.0286 & 0.0266\\
CG2 & 800 & 0.0156 & 0.0155\\
CG2 & 3200 & 0.0067 & 0.0069\\
N5 & 200 & 0.0301 & 0.0350\\
N5 & 800 & 0.0146 & 0.0227\\
N5 & 3200 & 0.0075 & 0.0170\\
t5 & 200 & 0.0277 & 0.0321\\
t5 & 800 & 0.0147 & 0.0222\\
t5 & 3200 & 0.0071 & 0.0164\\
\bottomrule
\end{tabular}
\end{adjustbox}
\end{table}

Table~\ref{tab:s-highdim} reports two further rank-based experiments with 20 replicates per sample size. In dimension ten the predictive divergence decreases at the same rate as in dimension five. The overfitted experiment fits two components to a single Gaussian copula. The density converges, as Theorem~\ref{thm:consistency} predicts, whereas the parameters need not, because the exact-order condition of Corollary~\ref{cor:mixing}(b) fails: about 80\% of the posterior mass stays outside a neighborhood of any single parameter value and the smaller weight remains near 0.2, so the posterior splits the true component between two nearby components rather than emptying one. This is the behavior that \citet{Rousseau2011} establish for regular overfitted mixtures when the Dirichlet parameter exceeds half the dimension of the component parameter, as it does here ($1>1/2$, the exchangeable correlation being scalar); with a parameter below $1/2$ their theory predicts emptying instead, although it does not cover pseudo-posteriors. Corollary~\ref{cor:mixing}(c) does not apply, because the redundant component belongs to the same family as the true one.

\begin{table}[htbp]
\centering\small
\setlength{\tabcolsep}{3.5pt}
\setstretch{1.1}
\caption{Additional rank-based experiments with 20 replicates: predictive Kullback--Leibler divergence ($\times10^3$, mean and standard error), posterior mass outside a normalized radius 0.2 of the true parameter, and posterior mean of the smaller weight.}
\label{tab:s-highdim}
\begin{adjustbox}{max width=\textwidth}
\begin{tabular}{@{}llccc@{}}
\toprule
Model & $n$ & KL & Mass outside & Smaller weight\\
\midrule
Gaussian, $d=10$ & 200 & 13.67 (2.53) & 0.06 & 0.40\\
Gaussian, $d=10$ & 800 & 2.27 (0.42) & 0.00 & 0.44\\
Gaussian, $d=10$ & 3200 & 0.79 (0.21) & 0.00 & 0.45\\
Student $t_5$, $d=10$ & 200 & 9.01 (1.65) & 0.12 & 0.39\\
Student $t_5$, $d=10$ & 800 & 2.37 (0.40) & 0.00 & 0.43\\
Student $t_5$, $d=10$ & 3200 & 0.70 (0.18) & 0.00 & 0.45\\
Overfitted Gaussian, $d=5$ & 200 & 5.50 (1.37) & 0.80 & 0.23\\
Overfitted Gaussian, $d=5$ & 800 & 1.78 (0.42) & 0.78 & 0.21\\
Overfitted Gaussian, $d=5$ & 3200 & 0.49 (0.07) & 0.79 & 0.20\\
\bottomrule
\end{tabular}
\end{adjustbox}
\par\smallskip\footnotesize Dimension-ten models are equicorrelated two-component mixtures with correlations 0.05 and 0.60. The overfitted model fits $K=2$ to a single Gaussian copula with correlation 0.40; there the density converges while the parameters are not identified, as Theorem~1 and Corollary~2 predict.
\end{table}

Table~\ref{tab:s-exp2} gives further results of Experiment~2. With separated components the pseudo-posterior estimates the component Kendall's $\tau$ more accurately than MPL at every sample size; near independence neither method recovers them well, as expected when the components are almost indistinguishable. Credible intervals for the weights cover the truth in 87\% to 91\% of the separated datasets, below the nominal level, and in nearly all near-independent datasets because they are then very wide. Effective sample sizes fall as $n$ grows near independence, where the posterior spreads along ridges of nearly equivalent mixtures; Experiment~3 in the main article shows that this is a property of the posterior rather than of the sampler.

Table~\ref{tab:s-exp2_equal} repeats the experiment with equal true weights $0.25$. The conclusions are the same, with one caution: near independence the posterior mean of the weights then stays close to the prior mean $(0.25,\ldots,0.25)$, which here equals the truth, so its small weight error reflects the prior rather than the data. The unequal weights of the main design avoid this coincidence.

\begin{table}[htbp]
\centering\scriptsize
\setlength{\tabcolsep}{3.5pt}
\setstretch{1.1}
\caption{Experiment 2 with true weights $(0.4,0.3,0.2,0.1)$: root mean squared error of the four weights (oracle pseudo-posterior O, rank-based pseudo-posterior B, rank-based MPL M), of the four component Kendall's $\tau$ and of the tail coefficients $(\lambda_L,\lambda_U)$; and, for the rank-based pseudo-posterior, the average coverage of 95\% credible intervals for the weights, the median effective sample size of the weights and the mean computing time in seconds, with the MPL time for comparison.}
\label{tab:s-exp2}
\begin{adjustbox}{max width=\textwidth}
\begin{tabular}{@{}llccccccccccc@{}}
\toprule
& & \multicolumn{3}{c}{Weights} & \multicolumn{2}{c}{Kendall's $\tau$} & \multicolumn{2}{c}{$(\lambda_L,\lambda_U)$} & & & \multicolumn{2}{c}{Time (s)}\\ Scenario & $n$ & O & B & M & B & M & B & M & Cov. & ESS & B & M\\
\midrule
Separated & 500 & 0.131 & 0.143 & 0.171 & 0.075 & 0.109 & 0.092 & 0.110 & 0.91 & 138 & 49 & 6\\
 & 2000 & 0.108 & 0.113 & 0.120 & 0.060 & 0.076 & 0.073 & 0.079 & 0.87 & 97 & 102 & 14\\
 & 8000 & 0.055 & 0.061 & 0.065 & 0.048 & 0.068 & 0.034 & 0.035 & 0.87 & 112 & 260 & 42\\
Near independence & 500 & 0.191 & 0.189 & 0.318 & 0.259 & 0.305 & 0.021 & 0.038 & 1.00 & 193 & 39 & 5\\
 & 2000 & 0.196 & 0.194 & 0.272 & 0.204 & 0.235 & 0.018 & 0.030 & 1.00 & 99 & 84 & 14\\
 & 8000 & 0.155 & 0.168 & 0.242 & 0.159 & 0.255 & 0.012 & 0.016 & 0.98 & 64 & 281 & 63\\
\bottomrule
\end{tabular}
\end{adjustbox}
\end{table}

\begin{table}[htbp]
\centering\scriptsize
\setlength{\tabcolsep}{3.5pt}
\setstretch{1.1}
\caption{Experiment 2 with true weights $(0.25,0.25,0.25,0.25)$: root mean squared error of the four weights (oracle pseudo-posterior O, rank-based pseudo-posterior B, rank-based MPL M), of the four component Kendall's $\tau$ and of the tail coefficients $(\lambda_L,\lambda_U)$; and, for the rank-based pseudo-posterior, the average coverage of 95\% credible intervals for the weights, the median effective sample size of the weights and the mean computing time in seconds, with the MPL time for comparison.}
\label{tab:s-exp2_equal}
\begin{adjustbox}{max width=\textwidth}
\begin{tabular}{@{}llccccccccccc@{}}
\toprule
& & \multicolumn{3}{c}{Weights} & \multicolumn{2}{c}{Kendall's $\tau$} & \multicolumn{2}{c}{$(\lambda_L,\lambda_U)$} & & & \multicolumn{2}{c}{Time (s)}\\ Scenario & $n$ & O & B & M & B & M & B & M & Cov. & ESS & B & M\\
\midrule
Separated & 500 & 0.121 & 0.130 & 0.172 & 0.077 & 0.169 & 0.071 & 0.096 & 0.96 & 185 & 44 & 5\\
 & 2000 & 0.113 & 0.123 & 0.125 & 0.056 & 0.081 & 0.070 & 0.072 & 0.84 & 153 & 67 & 9\\
 & 8000 & 0.057 & 0.063 & 0.053 & 0.026 & 0.029 & 0.035 & 0.029 & 0.88 & 193 & 315 & 54\\
Near independence & 500 & 0.106 & 0.107 & 0.302 & 0.268 & 0.306 & 0.024 & 0.041 & 0.99 & 175 & 39 & 5\\
 & 2000 & 0.114 & 0.114 & 0.261 & 0.203 & 0.236 & 0.014 & 0.022 & 1.00 & 115 & 70 & 14\\
 & 8000 & 0.151 & 0.143 & 0.232 & 0.168 & 0.224 & 0.012 & 0.016 & 0.98 & 59 & 179 & 45\\
\bottomrule
\end{tabular}
\end{adjustbox}
\end{table}

\section{Data}\label{sec:s-data}

\subsection{Student fitness tests}
The Jiangxi data are the supporting dataset S1 of \citet{Sun2024}; we used the published workbook without correction. All 8772 records have the three tests. A sensitivity sample of 8648 records excludes 123 students whose recorded age lies outside the 17--22 years stated by the source and one record whose sex-specific strength test is inconsistent with the recorded sex. The training and evaluation split (3000 and 5772 records) is stratified by sex and grade. Evaluation pseudo-observations use the within-sex empirical distribution functions of the training records, with average ranks for ties. The source describes a multistage stratified cluster sample from four universities but does not identify the university of each student, so no clustering is taken into account. Year of study is recorded as first, second or third year.

\subsection{Adults: NHANES 2011--2014}
We downloaded the demographic (DEMO\_G, DEMO\_H), muscle strength (MGX\_G, MGX\_H) and physical activity monitor day summary (PAXDAY\_G, PAXDAY\_H) files of the 2011--2012 and 2013--2014 cycles from the NCHS website \citep{NHANES}; their MD5 checksums are stored with the data. Variables are sex (RIAGENDR, as reported at the interview), age in years (RIDAGEYR, top-coded at 80), the grip test status (MGDEXSTS) and combined grip strength (MGDCGSZ, kg, the sum of the largest reading of each hand), and, for each monitored day, the wake-wear and sleep-wear minutes (PAXWWMD, PAXSWMD) and the daily MIMS total (PAXMTSD).

The two cycles contain 19931 participants, of whom 10907 adults aged 20 to 80 years have a grip record. Of these, 9794 completed the grip test (MGDEXSTS $=1$ and MGDCGSZ $>0$). A monitored day is valid if its wake-wear and sleep-wear minutes add up to at least 1368, that is, 95\% of the 1440 minutes; the partial first and last days of each recording therefore rarely qualify. We keep participants with at least four valid days and define activity as the mean daily MIMS total over their valid days. The analysis sample contains 5336 adults: 999, 1123 and 959 women and 805, 786 and 664 men in the age groups 20--39, 40--59 and 60--80 years. The training sample of 2499 participants is drawn at random within each of these six strata in proportion to its size; the remaining 2837 form the evaluation sample. Evaluation pseudo-observations use the within-stratum empirical distribution functions of the training sample, with the mid-rank convention for ties. The examination weights and design variables (WTMEC2YR, SDMVPSU, SDMVSTRA) are retained in the prepared file but not used.

\section{Additional application results}\label{sec:s-applications}

\subsection{Three-dimensional student fitness models}\label{sec:s-fitness3}
Table~\ref{tab:s-fitness3} reports the two-component Gaussian mixture for speed, power and flexibility. The weights are uncertain and the class probabilities of individual students are diffuse (mean normalized entropy 0.96), so the components describe two coupling regimes for speed and power rather than two groups of students. Table~\ref{tab:s-fitness3-sens} shows that excluding flagged records or breaking ties at random leaves the fit essentially unchanged. In the same model, the probability that a student's flexibility lies in the lowest within-sex quartile ranges only from 0.21 to 0.29 across the nine combinations of speed and power quartiles; it decreases with jump distance and hardly depends on sprint speed. Sprint and jump results therefore carry little information about flexibility.

\begin{table}[htbp]
\centering\small
\setlength{\tabcolsep}{3.5pt}
\setstretch{1.1}
\caption{Two-component Gaussian mixture for the standardized speed, power and flexibility scores of 8772 students: posterior means and 95\% credible intervals of the weights and normal-score correlations. Components are ordered by their speed--power correlation.}
\label{tab:s-fitness3}
\begin{adjustbox}{max width=\textwidth}
\begin{tabular}{@{}lcc@{}}
\toprule
Quantity & Component 1 & Component 2\\
\midrule
Weight & 0.52 (0.20, 0.81) & 0.48 (0.19, 0.80)\\
Speed--power & 0.27 (0.09, 0.37) & 0.63 (0.53, 0.76)\\
Speed--flexibility & 0.07 (-0.01, 0.16) & 0.07 (-0.03, 0.16)\\
Power--flexibility & 0.11 (-0.01, 0.20) & 0.20 (0.10, 0.32)\\
\bottomrule
\end{tabular}
\end{adjustbox}
\end{table}

\begin{table}[htbp]
\centering\footnotesize
\setlength{\tabcolsep}{3.5pt}
\setstretch{1.1}
\caption{Sensitivity of the two-component Gaussian fit: posterior mean weight of component 1, component speed--power correlations, mean normalized entropy of the class probabilities and percentage of students whose largest class probability exceeds 0.8.}
\label{tab:s-fitness3-sens}
\begin{adjustbox}{max width=\textwidth}
\begin{tabular}{@{}lcccccc@{}}
\toprule
Analysis & $n$ & $w_1$ & $\rho_1$ & $\rho_2$ & Entropy & Confident (\%)\\
\midrule
Primary (all records) & 8772 & 0.52 & 0.27 & 0.63 & 0.96 & 3.4\\
Excluding flagged records & 8648 & 0.49 & 0.25 & 0.62 & 0.96 & 2.4\\
Random ties (1) & 8772 & 0.43 & 0.22 & 0.60 & 0.95 & 1.9\\
Random ties (2) & 8772 & 0.52 & 0.26 & 0.63 & 0.96 & 3.5\\
\bottomrule
\end{tabular}
\end{adjustbox}
\end{table}

\subsection{Diagnostics}
Tables~\ref{tab:s-diag} and~\ref{tab:s-nhanes-diag} list rank-normalized split $\widehat R$ values \citep{Vehtari2021} and effective sample sizes. For the four-family fits by year of study, the largest $\widehat R$ of the two tail coefficients is 1.006 and the smallest effective sample size 627. The Gaussian mixtures with $K\ge2$ in the cross-validation folds of the runner data (Section~\ref{sec:s-running}) mixed less well, in some folds with a smallest effective sample size below 100, because their components are weakly separated; their predictive densities, which average over the posterior, are less affected than their component parameters, and no conclusion rests on them.

\begin{table}[htbp]
\centering\small
\setlength{\tabcolsep}{3.5pt}
\setstretch{1.1}
\caption{Convergence diagnostics of the application fits: largest rank-normalized split $\widehat R$ and smallest bulk effective sample size over weights, component Kendall's $\tau$ and tail coefficients (four-family fits, four or two chains) or over all parameters (runner fits, four chains).}
\label{tab:s-diag}
\begin{adjustbox}{max width=\textwidth}
\begin{tabular}{@{}lcc@{}}
\toprule
Fit & max $\widehat R$ & min ESS\\
\midrule
Training sample & 1.006 & 790\\
All students & 1.003 & 1016\\
Random ties (1) & 1.004 & 414\\
Random ties (2) & 1.002 & 469\\
Women & 1.019 & 618\\
Men & 1.006 & 339\\
Runners, Gaussian $K=1$ & 1.000 & 5061\\
Runners, Gaussian $K=2$ & 1.005 & 539\\
Runners, Student $t_5$ & 1.001 & 5233\\
\bottomrule
\end{tabular}
\end{adjustbox}
\end{table}

\begin{table}[t]
\centering\small
\setlength{\tabcolsep}{3.5pt}
\caption{Adults (NHANES): convergence diagnostics over all monitored quantities (weights, component Kendall's $\tau$, tail coefficients) for each fit; four chains for the full-sample and training fits of the four-family mixture and of the Gaussian copulas fitted to the training data, two chains otherwise.}
\label{tab:s-nhanes-diag}
\begin{adjustbox}{max width=\textwidth}
\begin{tabular}{@{}lccc@{}}
\toprule
Fit & Quantities & Max $\hat R$ & Min ESS\\
\midrule
All, four-family & 11 & 1.003 & 1396\\
All, Gaussian $K=1$ & 1 & 1.002 & 1834\\
All, Gaussian $K=2$ & 4 & 1.001 & 718\\
Training, four-family & 11 & 1.005 & 1007\\
Random ties (1) & 11 & 1.002 & 916\\
Random ties (2) & 11 & 1.002 & 748\\
Age 20--39, four-family & 11 & 1.004 & 463\\
Age 40--59, four-family & 11 & 1.002 & 1116\\
Age 60--80, four-family & 11 & 1.002 & 498\\
\bottomrule
\end{tabular}
\end{adjustbox}
\end{table}

\section{Wearable-device data and running performance}\label{sec:s-running}

This section applies the methods to a smaller, four-dimensional dataset. It contains Fitbit records and a supervised one-mile field test for first-year students at Oral Roberts University \citep{BroaddusData2018,Broaddus2021}. Of 581 students, 84 walked rather than ran the mile and were excluded, leaving 497 runners (281 women, 216 men). We use the semester averages of daily steps and of daily minutes in the device's fat-burn and cardio heart-rate zones, which in 2017 corresponded to 50--69\% and 70--84\% of an age-predicted maximum heart rate, and the mile time, reversed so that larger values mean faster running. The peak zone was not used, because it is zero or close to zero for most students and nearly collinear with the cardio zone (Spearman correlation 0.78 within sex). All variables are complete, and margins are standardized within sex.

Five-fold cross-validation (Table~\ref{tab:running}) favors models with heavier joint tails than a single Gaussian copula: a Student $t_5$ copula and a three-component Gaussian mixture improve the mean held-out log density by 0.033 and 0.039, about two standard errors. We use the $t_5$ copula, which is nearly as good and whose sampler mixes well, and check every conclusion against the Gaussian fits. The runner fits use two chains of 60000 iterations for each fold and four chains of 80000 iterations for the full sample, with a quarter of each chain discarded and thinning by ten.

\begin{table}[t]
\centering\small
\setlength{\tabcolsep}{3.5pt}
\caption{Runners. Top: five-fold cross-validated mean log copula density, with the difference from a single Gaussian copula and its paired standard error. Bottom: association of mile performance with each activity measure under the Student $t_5$ copula, as Kendall's $\tau$ and as partial correlation given the other two measures; posterior means and 95\% credible intervals.}
\label{tab:running}
\begin{adjustbox}{max width=\textwidth}
\begin{tabular}{@{}lcc@{}}
\toprule
Model & Mean log density & Difference (SE)\\
\midrule
Gaussian, $K=1$ & 0.338 & ---\\
Gaussian, $K=2$ & 0.357 & 0.019 (0.010)\\
Gaussian, $K=3$ & 0.377 & 0.039 (0.015)\\
Student $t_5$, $K=1$ & 0.371 & 0.033 (0.017)\\
\bottomrule
\end{tabular}
\end{adjustbox}

\vspace{0.8em}
\begin{adjustbox}{max width=\textwidth}
\begin{tabular}{@{}lcc@{}}
\toprule
Activity measure & Kendall's $\tau$ & Partial correlation\\
\midrule
Daily steps & 0.19 (0.13, 0.24) & 0.36 (0.27, 0.43)\\
Cardio-zone minutes & 0.08 (0.02, 0.14) & 0.20 (0.11, 0.29)\\
Fat-burn-zone minutes & $-$0.13 ($-$0.18, $-$0.07) & $-$0.38 ($-$0.45, $-$0.30)\\
\bottomrule
\end{tabular}
\end{adjustbox}
\end{table}

The three activity measures are positively correlated with one another, yet they relate to running in different directions. Holding the other two measures fixed, mile performance is positively associated with steps (partial correlation 0.36) and with cardio-zone minutes (0.20) but negatively with fat-burn-zone minutes ($-0.38$; Table~\ref{tab:running}). Figure~\ref{fig:s-running} shows the corresponding conditional probabilities of a mile time in the fastest quartile of the student's sex. With the other measures at their medians, moving from the lower to the upper quartile of steps raises this probability from 0.11 to 0.28, while the same move in fat-burn minutes lowers it from 0.30 to 0.10. A student in the upper quartile of steps and the lower quartile of fat-burn minutes has a 0.43 probability (95\% interval 0.39 to 0.47) of being among the fastest quarter, against 0.17 (0.15 to 0.20) for a student with the same steps and cardio minutes but upper-quartile fat-burn minutes.

\begin{figure}[t]
\centering
\includegraphics[width=\textwidth]{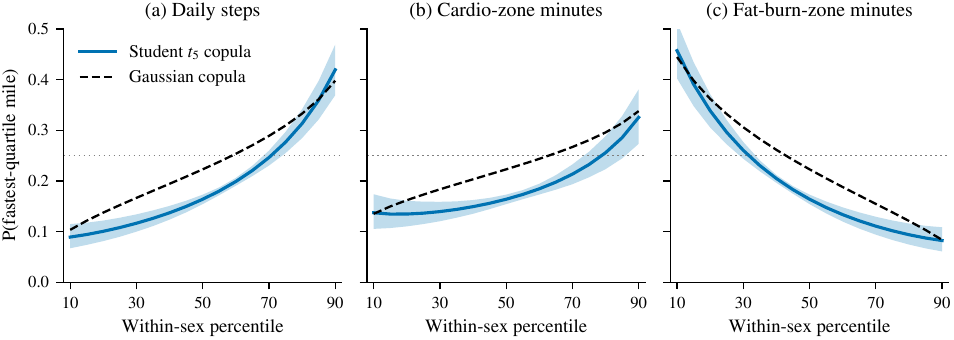}
\caption{Probability of a mile time in the fastest within-sex quartile among 497 runners, as one activity measure varies and the other two are held at their within-sex medians. Posterior predictive probabilities under a Student $t_5$ copula (solid, with 95\% band) and a Gaussian copula (dashed); the dotted line marks the unconditional value 0.25.}
\label{fig:s-running}
\end{figure}

The negative association has a physiological reading. Heart-rate zones are defined relative to an age-predicted maximum, and heart rate during a given activity falls as aerobic fitness rises \citep{Achten2003}. At a fixed step count, therefore, more time in the moderate zone indicates that everyday movement raises the student's heart rate more, which is a sign of lower fitness rather than of extra training. Vigorous cardio-zone minutes, which mostly reflect deliberate exercise, point the other way. For monitoring programs that build goals from wearable ``active minutes'', heart-rate-based minutes measure physiological strain as well as activity and should be read jointly with steps rather than added to them. These are cross-sectional associations among 497 students, and consumer devices measure heart rate and activity with error \citep{Feehan2018}; device placement, resting heart rate and unrecorded activity may also contribute.

\end{document}